\documentclass[11pt]{amsart}
\usepackage{graphicx,amssymb,amsmath,amsthm}
\usepackage{enumerate}
\usepackage{dsfont}
\usepackage[colorlinks, citecolor=red]{hyperref}
\usepackage{rotating}
\usepackage{mathrsfs}
\usepackage{epsfig}
\usepackage{lscape}
\usepackage{subfigure}
\usepackage{epstopdf}
\usepackage{multirow}

\usepackage{caption}
\usepackage{algorithm}
\usepackage{algpseudocode}

\usepackage{geometry}
\usepackage{amsfonts}
\usepackage{graphicx,cite}

\usepackage{longtable}
\usepackage{paralist}
\usepackage{enumerate}
\usepackage{url}

\textheight 8.0in
\textwidth 6.00in
\topmargin -0.25in
\oddsidemargin 0.25in
\evensidemargin 0.25in
\parskip 1.0ex

\newcommand{\abs}[1]{\lvert#1\rvert}

\newcommand{\argmin}[1]{\mathop{\rm argmin}\limits_{#1}}
\newcommand{\argmax}[1]{\mathop{\rm argmax}\limits_{#1}}

\newcommand{\ve}{{\mathbf e}}

\newcommand{\R}{{\mathbb R}}

\newcommand{\cS}{{\mathcal S}}

\newcommand{\rank}{{\rm rank}}

\newtheorem{definition}{Definition}[section]
\newtheorem{corollary}[definition]{Corollary}

\newtheorem{theorem}[definition]{Theorem}
\newtheorem{lemma}[definition]{Lemma}

\newtheorem{problem}[definition]{Problem}
\date{}

\begin{document}
\baselineskip 18pt
\bibliographystyle{plain}
\title[Subset selection for matrices with Fixed Blocks]
{Subset Selection for Matrices with Fixed Blocks}

\author{Jiaxin Xie}
\address{School of Mathematical Sciences, Beihang University, Beijing, 100191, China }
\email{xiejx@buaa.edu.cn}

\author{Zhiqiang Xu}
\thanks{Zhiqiang Xu was supported  by NSFC grant (91630203, 11688101),
Beijing Natural Science Foundation (Z180002).}
\address{LSEC, Inst.~Comp.~Math., Academy of
Mathematics and System Science,  Chinese Academy of Sciences, Beijing, 100091, China \newline
School of Mathematical Sciences, University of Chinese Academy of Sciences, Beijing 100049, China}
\email{xuzq@lsec.cc.ac.cn}

\begin{abstract}
Subset selection for  matrices is the task of extracting a column sub-matrix from a
given matrix $B\in\mathbb{R}^{n\times m}$ with $m>n$ such that the pseudoinverse of
the sampled matrix has as small Frobenius or spectral norm as possible. In this
paper, we consider  a more general problem of subset selection for matrices that
allows a block to be fixed  at the beginning. Under this setting, we provide a
deterministic method for  selecting  a column sub-matrix from $B$. We also present a
bound for both the Frobenius and  spectral  norms of the pseudoinverse of the sampled
matrix, showing that the bound is asymptotically optimal.
  The main technology  for proving  this result is the  interlacing  families   of polynomials
 developed  by Marcus, Spielman,  and Srivastava. This idea also results   in a
  deterministic greedy selection algorithm that produces the sub-matrix promised by our result.
\end{abstract}
\maketitle

\section{Introduction}
\subsection{Subset selection for matrices}
Subset  selection for  matrices aims to  select a column sub-matrix  from a given
matrix $B\in\mathbb{R}^{n\times m}$ with $m>n$ such that the sampled matrix is
well-conditioned. For convenience, we  assume that $B$ is \emph{full-rank}, i.e.,
$\mbox{rank}(B)=n$. Given $\cS \subseteq [m]:=\{1,\ldots,m\}$, the cardinality of the
set $\cS$ is denoted by $|\cS|$. We use $B_{\cS}$ to denote the sub-matrix of $B$
obtained by extracting the columns of $B$ indexed by $\cS$ and use
$B_{\mathcal{S}}^{\dagger}$ to denote the Moore-Penrose pseudoinverse of $B_{\cS}$.
We use $\|B\|_2:=\max\limits_{\|x\|_2=1} \|Bx\|_2$  and
$\|B\|_F:=\sqrt{\mbox{Tr}(BB^T)}$ to denote, respectively, the spectral norm  and
Frobenius norm of $B$. Let $k\in [n,m-1]:=\{n,\ldots,m-1\}$ be a sampling parameter.
We can formulate the subset selection for matrices as follows.

\begin{problem}
\label{SSM} Find a subset $\cS_{opt}\subset\{1,2,\ldots,m\}$ with  cardinality at
most $k$ such that $\mbox{rank}(B_{\mathcal{S}_{opt}})=\mbox{rank}(B)$ and
$\|B_{\mathcal{S}_{opt}}^{\dagger}\|_{\xi}^2$ is minimized, i.e.,
 $$
 \mathcal{S}_{opt}\in\argmin{\mathcal{S}\in\mathcal{F}(B,k)}\|B_{\mathcal{S}}^{\dagger}\|_{\xi}^2,
 $$
 where $\mathcal{F}(B,k)=\{\mathcal{S}:|\mathcal{S}|\leq k,
 \mbox{rank}(B_{\mathcal{S}})=\mbox{rank}(B)\}$ and
 $\xi=2$ and $\mbox{F}$ denotes the spectral and  Frobenius matrix norm, respectively.
\end{problem}
Problem \ref{SSM} is raised in many applied areas, such as preconditioning for
solving linear systems\cite{arioli2014},   sensor selection \cite{joshi2009}, graph
signal processing \cite{chenD2015,zhao2016}, and feature selection in $k$-means
clustering \cite{boutsidis2013,boutsidis2011}. In \cite{avron2013}, Avron and
Boutsidis showed an interesting  connection between Problem \ref{SSM}  and the
combinatorial problem of finding a low-stretch spanning tree in an undirected graph.
In the statistics literature, the subset selection  problem has also been studied.
For instance,  the solution to  Problem \ref{SSM} has statistically optimal design
for linear regression provided  $\xi=F$  \cite{Derezinski2018,pukelsheim2006}.

One simple method for solving Problem \ref{SSM} is to evaluate  the performance of
all ${m \choose k} $ possible subsets with size $k$, but evidently it is
computationally expensive unless $m$ or $k$ is very small. In \cite{civril2009}, \c
Civril and Magdon-Ismail studied the complexity of the spectral norm version of
Problem \ref{SSM}, where they showed that it is NP-hard. Several heuristics have been
proposed to approximately solve the subset selection problem (see Section
\ref{related-work}).

\subsection{Our contribution}

In this paper, we consider a \emph{generalized version} of  subset selection for
matrices, where
 we have a matrix $A$ fixed at first, and our goal is to supplement   this matrix  by adding columns of $B$ such that $[A \ B_{\mathcal{S}}]$ has as small Frobenius or spectral norm as possible. Usually, $A$ is chosen  as a column sub-matrix of $B$.
 This notion of keeping a fixed block of $B$ is useful
 if we already know that such a block has some distinguished properties.
We state the problem as follows:

\begin{problem}
\label{GSSM}
Suppose that $A\in\mathbb{R}^{n\times \ell}$ and $B\in\mathbb{R}^{n\times m}$ with $\mbox{rank}(A)=r$ and $rank\big([A \  B]\big)=n$.
Find a subset $\cS_{opt}\subset\{1,2,\ldots,m\}$ with cardinality at most $k\in [n-r,m-1]$ such that $\mbox{rank}([A\ B_{\mathcal{S}_{opt}}])=\mbox{rank}([A\ B])$ and $\|[A\ B_{\mathcal{S}_{opt}}]^{\dagger}\|_{\xi}^2$ is minimized, i.e.,
 $$
 \mathcal{S}_{opt}\in\argmin{\mathcal{S}\in\mathcal{F}(B,k)}\big\|([A\ B_{\mathcal{S}}])^{\dagger}\big\|_{\xi}^2,
 $$
 where  $\mathcal{F}(B,k)=\{\mathcal{S}:|\mathcal{S}|\leq k, \mbox{rank}([A\ B_{\mathcal{S}}])=\mbox{rank}([A\ B])\}$ and $\xi=2$ and $\mbox{F}$ denotes the spectral and  Frobenius matrix norm, respectively.
\end{problem}

We would like to mention that the Frobenius norm version of Problem \ref{GSSM} was
considered in \cite{youssef2015LAA}. If we take $A=0$, then Problem \ref{GSSM} is
reduced to Problem \ref{SSM}.  Hence, the results presented in this paper also
present a solution to Problem \ref{SSM}. We next state the main result of this paper.
For convenience,  throughout this paper, we set
\begin{equation}
\label{gamma}
\Gamma(m,n,k,r):=\frac{m^2}{\big(\sqrt{(k+1)(m-n+r)}-\sqrt{(n-r)(m-k-1)}\big)^2},
\end{equation}
where   $m,n,k,r\in {\mathbb Z}$.
 We have the following result for Problem \ref{GSSM}.

\begin{theorem}
\label{fix-selection} Suppose that $A\in\mathbb{R}^{n\times \ell}$ and
$B\in\mathbb{R}^{n\times m}$ with $\mbox{rank}(A)=r$ and $rank\big([A \  B]\big)=n$.
Then for any fixed $k\in [n-r, m-1]$, there exists a subset
$\mathcal{S}_0\subseteq[m]$ with  cardinality $k$ such that $[A \ B_{\mathcal{S}_0}]$
is full-rank and
\begin{equation}
\label{main-result-bound}
\big\|\big([A \ B_{\mathcal{S}_0}]\big)^{\dagger}\big\|_{\xi}^2 \leq \Gamma(m,n,k,r)\cdot \bigg(1+\frac{\|A^{\dagger}B\|^2_F}{m-n+r}\bigg) \cdot\big\|([A \ \, B])^{\dagger}\big\|_{\xi}^2,
\end{equation}
where  $\xi\in \{2,F\}$.
\end{theorem}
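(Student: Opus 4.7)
The plan is to adapt the interlacing families framework of Marcus, Spielman, and Srivastava (MSS) to accommodate the fixed block $A$.

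First I would recast the statement as a spectral inequality. Writing $b_1,\ldots,b_m$ for the columns of $B$ and setting $M_{\mathcal{S}} := [A\ B_{\mathcal{S}}][A\ B_{\mathcal{S}}]^\T = AA^\T + \sum_{i\in\mathcal{S}} b_i b_i^\T$, one has $\|[A\ B_{\mathcal{S}}]^{\dagger}\|_2^2 = 1/\lambda_{\min}(M_{\mathcal{S}})$ and $\|[A\ B_{\mathcal{S}}]^{\dagger}\|_F^2 = \tr(M_{\mathcal{S}}^{-1})$, so the task is to produce an $\mathcal{S}_0$ of size $k$ for which the eigenvalues of $M_{\mathcal{S}_0}$ are simultaneously bounded below (for the spectral bound) or small in their reciprocal sum (for the Frobenius bound). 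A useful preliminary reduction is to split $B = P_A B + P_A^{\perp} B$, where $P_A = AA^{\dagger}$, which will let the range-of-$A$ contribution be absorbed into the $(1+\|A^{\dagger}B\|_F^2/(m-n+r))$ factor and reduce the ``hard'' selection problem to one taking place essentially in the $(n-r)$-dimensional complement of $\operatorname{range}(A)$.

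Second, I would construct a probability distribution over $k$-subsets and compute the expected characteristic polynomial of $M_{\mathcal{S}}$. Sampling $k$ indices uniformly at random (or each index independently with probability $(k+1)/m$), a Cauchy--Binet expansion should give $\E_{\mathcal{S}}\det(xI - M_{\mathcal{S}})$ as a convex combination of determinantal polynomials and, after the reduction above, express it as the application of a differential operator to a base polynomial built from $AA^\T + BB^\T$. The precise form should be a shifted Jacobi-type mixed characteristic polynomial whose smallest root can be controlled analytically.

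Third, I would establish the interlacing property. Arranging subsets in a binary tree according to whether each column is kept or discarded, the collection $\{\det(xI-M_{\mathcal{S}})\}_{|\mathcal{S}|=k}$ should form an interlacing family, because conditional expectations inherit common interlacers from the real-rootedness of rank-one perturbations (the ``absence of a column'' polynomial differs from ``presence'' by a rank-one downdate). The MSS interlacing lemma then guarantees an $\mathcal{S}_0$ whose characteristic polynomial has smallest root at least the smallest root of the expected polynomial. Bounding that root by the barrier function method of MSS should yield the factor $\Gamma(m,n,k,r)$; the expression $\bigl(\sqrt{(k+1)(m-n+r)}-\sqrt{(n-r)(m-k-1)}\bigr)^2$ is the standard Jacobi-type lower bound one obtains by optimizing a two-parameter barrier. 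The Frobenius case is handled by the same existence statement, since interlacing controls the entire root sequence and not merely the smallest root, so $\tr(M_{\mathcal{S}_0}^{-1})$ is bounded by $\tr$ of the corresponding inverse at the root level of the expected polynomial.

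The main obstacle is pushing the barrier argument through in the presence of the fixed block $A$. When $A=0$ the analysis reduces to the classical MSS setting on $\sum_{i\in\mathcal{S}} b_i b_i^\T$; with $A\neq 0$ the expected polynomial is a \emph{shifted} mixed characteristic polynomial and one has to track how the rank-$r$ piece $AA^\T$ interacts with the random rank-one updates. Concretely, the $(1+\|A^{\dagger}B\|_F^2/(m-n+r))$ factor has to emerge from balancing the barrier increase caused by the ``in-range'' component $P_A B$ against the decrease one gets from the ``out-of-range'' component $P_A^{\perp} B$; choosing the correct starting barrier parameters so that the final bound specializes correctly when $r=0$, and so that the dimension $n$ in $\Gamma$ is replaced effectively by $n-r$ for the orthogonal part, is the key bookkeeping step. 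Once this barrier calculation is carried out, the deterministic greedy algorithm mentioned in the abstract follows by inspecting, at each level of the interlacing tree, the subtree whose conditional expected polynomial has the better root bound.
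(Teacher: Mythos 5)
Your overall architecture (interlacing family, expected characteristic polynomial, barrier bound on its smallest root) is the same as the paper's, but there is a genuine gap in how you propose to obtain the Frobenius bound. The interlacing-family lemma (Lemma \ref{root-interlacing} here, Theorem 2.7 of Marcus--Spielman--Srivastava) guarantees, \emph{for each fixed index} $j$ separately, the existence of some leaf $v_1$ with $\lambda_j(f_{v_1})\geq\lambda_j(f_{\emptyset})$ --- but the leaf may depend on $j$. It does not produce a single subset $\mathcal{S}_0$ whose entire root sequence dominates that of the expected polynomial, so your claim that ``interlacing controls the entire root sequence'' and hence bounds $\tr(M_{\mathcal{S}_0}^{-1})$ does not follow. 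The paper sidesteps this entirely: it whitens by writing $[A\ B]=U\Sigma Y$ with $YY^{\top}=I$, proves only a lower bound on the \emph{smallest} root, i.e.\ on $\sigma_{\min}([Y_{\cS_A}\ Y_{\cS_0}])^2$, and then transfers that single scalar bound to both norms at once via the submultiplicativity $\|PQ\|_{\xi}\leq\|P\|_2\,\|Q\|_{\xi}$ applied to $([A\ B_{\cS_0}])^{\dagger}=([Y_{\cS_A}\ Y_{\cS_0}])^{\dagger}\Sigma^{-1}U^{\top}$, together with $\|([A\ B])^{\dagger}\|_{\xi}=\|\Sigma^{-1}\|_{\xi}$. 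Some such device is indispensable; the per-index interlacing conclusion alone will not give the Frobenius case.

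A second, related problem is that you work directly with $M_{\mathcal{S}}=AA^{\top}+\sum_{i\in\mathcal{S}}b_ib_i^{\top}$ and propose the splitting $B=P_AB+P_A^{\perp}B$. Without first passing to the isotropic frame $YY^{\top}=I$, the columns being added are not isotropic, the expected characteristic polynomial does not collapse to the closed form $\frac{(m-k)!}{m!}(x-1)^{-(m-n-k)}\partial_x^{k}(x-1)^{m-n}x^{n-r}\prod_{i}(x-\sigma_i(M)^2)$ that the barrier argument actually analyzes, and the \emph{relative} bound against $\|([A\ B])^{\dagger}\|_{\xi}^2$ does not emerge. In the paper the factor $1+\|A^{\dagger}B\|_F^2/(m-n+r)$ comes not from balancing in-range against out-of-range barrier increments, but from the elementary estimate $\|(Y_{\cS_A})^{\dagger}\|_F^2\leq r+\|A^{\dagger}B\|_F^2$ (Lemma \ref{lem-moorep}) plugged into the barrier bound's dependence on $\|M^{\dagger}\|_F^2$. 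So the whitening reduction is not optional bookkeeping; it is the step that makes both the polynomial identity and the two-norm transfer work.
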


The proof of Theorem \ref{fix-selection} provides a deterministic   algorithm for
computing the subset $\mathcal{S}_0$ in time $O(k(m-\frac{k}{2})n^{\theta}\log
(1/\epsilon))$,  where $\theta>2$ is the exponent of the complexity of matrix
multiplication, which we will introduce it in Section 4.

Taking $A=0$ in Theorem \ref{fix-selection},  we obtain the following corollary.
\begin{corollary}
\label{co:selection} Suppose that $B\in\mathbb{R}^{n\times m}$ with
$rank\big(B\big)=n$. Then for any fixed $k\in [n, m-1]$, there exists a subset
$\mathcal{S}_0\subseteq[m]$ with cardinality $k$ such that
$rank(B_{\mathcal{S}_0})=n$ and for both $\xi=2$ and $F$,
\begin{equation}\label{eq:cobound}
\big\|B_{\mathcal{S}_0}^{\dagger}\big\|_{\xi}^2 \leq \Gamma(m,n,k,0)\cdot\big\|B^{\dagger}\big\|_{\xi}^2.
\end{equation}
\end{corollary}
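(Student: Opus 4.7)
The plan is to deduce Corollary \ref{co:selection} as the direct $A = 0$ specialization of Theorem \ref{fix-selection}. I would take $A \in \mathbb{R}^{n \times \ell}$ to be an arbitrary zero matrix (for concreteness, $\ell = 1$), so that $r = \mathrm{rank}(A) = 0$ and $\mathrm{rank}([A \ B]) = \mathrm{rank}(B) = n$, verifying the hypotheses of Theorem \ref{fix-selection}. The admissible range $k \in [n - r, m - 1]$ from the theorem coincides with $[n, m - 1]$, which is exactly the hypothesis of the corollary.

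With this choice, two simplifications occur in the bound \eqref{main-result-bound}. First, $A^{\dagger} = 0$, so $\|A^{\dagger} B\|_F^2 = 0$ and the multiplicative factor $\bigl(1 + \|A^{\dagger} B\|_F^2 / (m - n + r)\bigr)$ collapses to $1$. Second, prepending zero columns leaves the spectral and Frobenius norms of a pseudoinverse unchanged: if $M = [\,0 \ C\,]$, where $0$ is an $n \times \ell$ block of zeros and $C \in \mathbb{R}^{n \times p}$ has full row rank, then $M M^{\T} = C C^{\T}$, so $M^{\dagger} = M^{\T}(M M^{\T})^{-1}$ is a zero block stacked on top of $C^{\dagger}$; hence $\|M^{\dagger}\|_{\xi} = \|C^{\dagger}\|_{\xi}$ for $\xi \in \{2, F\}$. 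Applied to $M = [A \ B_{\mathcal{S}_0}]$ and $M = [A \ B]$, this converts both sides of \eqref{main-result-bound} into the corresponding quantities in \eqref{eq:cobound}.

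Combining these two observations, Theorem \ref{fix-selection} produces a subset $\mathcal{S}_0 \subseteq [m]$ of cardinality $k$ with $\mathrm{rank}(B_{\mathcal{S}_0}) = n$ satisfying $\|B_{\mathcal{S}_0}^{\dagger}\|_{\xi}^2 \leq \Gamma(m, n, k, 0) \cdot \|B^{\dagger}\|_{\xi}^2$, which is precisely \eqref{eq:cobound}. I do not anticipate any substantive obstacle here: the content of the corollary is already contained in Theorem \ref{fix-selection}, and the remaining work is only the elementary verification that padding by zero columns is inert for both pseudoinverse norms and that the $\|A^{\dagger} B\|_F^2$ term vanishes.
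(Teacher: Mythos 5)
Your proposal is correct and follows exactly the paper's route: the paper derives Corollary \ref{co:selection} simply by setting $A=0$ in Theorem \ref{fix-selection}, and your verification that the factor $1+\|A^{\dagger}B\|_F^2/(m-n+r)$ becomes $1$ and that zero-column padding leaves both pseudoinverse norms unchanged supplies precisely the routine details the paper leaves implicit.
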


\subsection{Related work}
\label{related-work} In this subsection, we give a summary of known results on the
subset selection problem and also provide comparisons between our results and those
of previous studies.
\subsubsection{Lower bounds}

A \emph{lower bound} is defined as a non-negative number $\gamma$ such that there
exists a matrix $B\in\mathbb{R}^{n\times m}$ satisfying
$$
\|B^{\dagger}_{\cS}\|_{\xi}^2\geq \gamma \|B^{\dagger}\|^2_{\xi}
$$
for every $\cS$ of cardinality $k\geq n$. Lower bounds for Problem \ref{SSM} were
studied  in \cite{avron2013}. Particularly, for $\xi=2$, Theorem $4.3$ in
\cite{avron2013} showed a lower bound is $\frac{m}{k}-1$. For $\xi=F$, according to
Theorem $4.5$ in \cite{avron2013}, we know  a bound is $\frac{m}{k}-O(1)$ provided
$k=O(n)$. The approximation bound presented in Corollary \ref{co:selection}
asymptotically matches those bounds. Indeed, according to Corollary
\ref{co:selection}, we have
$$
\|B_{\mathcal{S}}^{\dagger}\|_{\xi}^2\leq \Gamma(m,n,k,0) \cdot \|B^{\dagger}\|_{\xi}^2.
$$
If $n/k$ is fixed, then $\Gamma(m,n,k,0)=O(m/k)$, which asymptotically matches  the
lower bounds presented in \cite{avron2013}. Besides, if $k/m$ is fixed and $m$ is
large enough, then $\Gamma(m,n,k,0)\approx m/k$ which is close to the lower bound
$m/k-1$.


\subsubsection{Restricted invertibility principle }
The \emph{restricted invertibility} problem asks whether  one can select a large number of \emph{linearly independent} columns of $B$ and provide an estimation for the norm of the restricted inverse. To be more precise, one wants to find a subset $\cS$, with cardinality $k\leq \mbox{rank}(B)$ being as large as possible, such that $\|B_{\cS}x\|_2\geq c\|x\|_2$ for all $x\in\mathbb{R}^{|\cS|}$ and to estimate the constant $c$.
In \cite{bourgain1987}, Bourgain and Tzafriri introduced the restricted invertibility problem
 and showed its applications in geometry and analysis. Later, their results were improved in  \cite{vershynin2001, youssef2014,youssef2014-2, spielman2012an}.
In \cite{spielman17}, Marcus, Spielman, and Srivastava employed the method of
interlacing families of polynomials  to sharpen this result and presented  a simple
proof to the restricted invertibility principle. One can see \cite{Naor2017} for a
survey of the recent development in restricted invertibility.

Problem \ref{SSM} is different from the restricted invertibility problem. In Problem
\ref{SSM}, we require $|{\mathcal S}|\geq {\rm rank}(B)$, while in  the restricted
invertibility problem, one focuses on the case where $|{\mathcal S}|\leq {\rm
rank}(B)$. We would like to mention that our proof for Theorem \ref{fix-selection} is
inspired by the method developed  by    Marcus, Spielman, and Srivastava
\cite{spielman17} to study the restricted invertibility principle. We will introduce
the main idea of the proof in Section 1.4.

\subsubsection{Approximation bounds for $\xi=F$}
We first focus on known bounds for
$$
\|B_{\mathcal{S}}^{\dagger}\|_{F}^2/\|B^{\dagger}\|_{F}^2.
$$

In \cite{avron2013,hoog2007,hoog2011},  a greedy  algorithm was developed, where one
``bad'' column of $B$ is removed at each step. As shown in
\cite{avron2013,hoog2007,hoog2011},  the greedy algorithm can find a subset $\cS$
with $\abs{\cS}=k\geq n$ such that
\begin{equation}\label{eq:abound}
\|B_{\mathcal{S}}^{\dagger}\|_{F}^2\leq \frac{m-n+1}{k-n+1}\|B^{\dagger}\|_{F}^2
\end{equation}
 in $O(mn^2+mn(m-k))$ time. If $n/k<1$ is
fixed, the  bound $\frac{m-n+1}{k-n+1}$ in (\ref{eq:abound}) is $O(m/k)$, which is as
same as that in (\ref{eq:cobound}).

In \cite{youssef2015LAA}, the Frobenius norm version of Problem \ref{GSSM} was
studied. Let $A$ be a fixed matrix. The author of  \cite{youssef2015LAA} showed that
for any sampling parameter $k\in [ n-\lfloor\|([A \ \,
B])^{\dagger}A\|^2_F\rfloor,m-1]$, one can produce a subset ${\mathcal S}$ satisfying
$\abs{\cS}=k$ in $O((n^3+mn^2)(m-k))$ time while presenting  an upper bound on
$\big\|\big([A \ B_{\mathcal{S}}]\big)^{\dagger}\big\|_{F}^2$.
  Note that Theorem
\ref{fix-selection} requires  the sampling parameter $k\in [n-r,m-1]$, and hence it
is available for a wider range of $k$. Here, we use the fact that $\|([A \ \,
B])^{\dagger}A\|^2_F=\mbox{Tr}\big((AA^T+BB^T)^{-1}AA^T\big)\leq r$, and hence
$n-r\leq n-\lfloor\|([A \ \, B])^{\dagger}A\|^2_F\rfloor$.

\subsubsection{Approximation bounds for $\xi=2$}
For $\xi=2$,   an algorithm  was developed in \cite{avron2013}, which outputs $\cS$
satisfying $\abs{\cS}=k$ and
\begin{equation}\label{eq:2normbound}
\|B_{\mathcal{S}}^{\dagger}\|_{2}^2\leq \left(1+\frac{n(m-k)}{k-n+1}\right)\|B^{\dagger}\|_{2}^2.
\end{equation}
 The algorithm runs in $O(mn^2+mn(m-k))$ time.
If $n/k$ is fixed, the asymptotic  bound in (\ref{eq:2normbound}) is $O(m-k+1)$,
which is larger than that in (\ref{eq:cobound}), i.e.,  $\Gamma(m,n,k,0)$. For
$\xi=2$, to our knowledge, Problem \ref{GSSM} has not been considered in any previous
papers, and Theorem \ref{fix-selection} is the first work on an approximation bound
as well as a deterministic algorithm for Problem \ref{GSSM}.

\subsubsection{Approximation bounds for both $\xi=2$ and $F$}

In \cite{avron2013}, a deterministic algorithm have developed  for both $\xi=2$ and
$F$. The algorithm, which runs in  $O(kmn^2)$ time, outputs a set ${\mathcal S}$
 satisfying $\abs{\mathcal S}=k>n$ and
\begin{equation}\label{eq:F2}
\|B_{\mathcal{S}}^{\dagger}\|_{\xi}^2\leq\left(1+\sqrt{\frac{m}{k}}\right)^2\left(1-\sqrt{\frac{n}{k}}\right)^{-2}\|B^{\dagger}\|_{\xi}^2, \quad \xi\in \{2, F\}.
\end{equation}
Noting that
\begin{eqnarray*}
\Gamma(m,n,k,0)&\leq& \bigg(1+\frac{\sqrt{n(k+1)}}{m}\bigg)^{-1}\frac{m}{(\sqrt{k+1}-\sqrt{n})^2}\\
&<&\frac{m}{(\sqrt{k+1}-\sqrt{n})^2}<\frac{(\sqrt{k}+\sqrt{m})^2}{(\sqrt{k}-\sqrt{n})^2}\\
&=&\bigg(1+\sqrt{\frac{m}{k}}\bigg)^2\bigg(1-\sqrt{\frac{n}{k}}\bigg)^{-2},
\end{eqnarray*}
 we obtain that
$$
\Gamma(m,n,k,0)< \bigg(1+\sqrt{\frac{m}{k}}\bigg)^2\bigg(1-\sqrt{\frac{n}{k}}\bigg)^{-2}.
$$
Hence the bound $\Gamma(m,n,k,0)$, which is presented in Corollary
\ref{co:selection}, is much better than the one in (\ref{eq:F2}). Particularly, when $k$
tends to $n$, the approximation bound in (\ref{eq:F2}) goes to infinity while
$\Gamma(m,n,k,0)$ is still finite. Hence, the bound $\Gamma(m,n,k,0) $ is far better
than the one in (\ref{eq:F2}) when $k$ is close to $n$.

\subsubsection{Algorithms}
Many random algorithms have developed for solving Problem \ref{SSM} (see
\cite{avron2013}). In this paper, we focus on deterministic algorithms. Motivated by
the proof of  Theorem \ref{fix-selection}, we introduce a deterministic algorithm in
Section 4, which outputs a subset ${\mathcal S}$ such that
$$
\big\|\big([A \ B_{\mathcal{S}}]\big)^{\dagger}\big\|_{\xi}^2 \leq \Gamma(m,n,k,r)\cdot\bigg(1+\frac{\|A^{\dagger}B\|^2_F}{m-n+r}\bigg)\cdot\left(1+2k\epsilon\right) \cdot\big\|([A \ \, B])^{\dagger}\big\|_{\xi}^2
$$
for any fixed $\epsilon\in (0,\frac{1}{2k})$.
 As shown in Theorem \ref{th:alg}, the complexity of the algorithm is
 $O(k(m-\frac{k}{2})n^{\theta}\log (1/\epsilon))$
 where $\theta>2$ is the exponent of  matrix multiplication.
We emphasize that our algorithm is faster than all  algorithms mentioned in Section
1.3.3 and Section 1.3.4 when $m$ is large enough because  there exists a factor $m^2$
in the computational cost of all of them, while the time complexity of our algorithm
is linear in $m$.

Note that the time complexity of the algorithm mentioned in Section 1.3.5 is much
better than that of our algorithm. However, as said before, the  approximation bound
obtained by our algorithm is far better than that provided by the algorithms  in
Section 1.3.5. Moreover, our algorithm can solve both Problem \ref{SSM} and Problem
\ref{GSSM}, while all  the other algorithms only work for Problem \ref{SSM}.

\subsection{Our techniques}
Our proof of Theorem \ref{fix-selection} builds on the  method of interlacing
families, which is a powerful technology  developed in \cite{spielman151,spielman152}
(see also \cite{spielman17,spielmaniv})  by Marcus, Spielman, and Srivastava.  Recall
that an interlacing family of polynomials  always contains a polynomial whose $k$-th
largest root is at least the $k$-th largest root of the sum of the polynomials in the
family. This property plays  a key role in our argument.

 Suppose that $Y\in\mathbb{R}^{n\times (m+\ell)}$ whose rows are composed of
  right singular vectors of $[A \ \, B]\in \mathbb{R}^{n\times(m+\ell)}$.
Because  the right singular vectors are orthonormal, we have $YY^T=I$. Our method is
based on the observation that the column space of $[A \ \, B]$ and the column space
of $Y\in\mathbb{R}^{n\times (m+\ell)}$ are identical.
Hence,  we just need to consider the subset selection problem for  the isotropic case
$YY^T=I$ (see Section 3 for detail). We assume that $M$ is a fixed sub-matrix of $Y$
corresponding to a  subset $\cS_M$, i.e., $M=Y_{\cS_M}$. We then show that the
characteristic polynomials of $Y_{\cS}Y_{\cS}^T+MM^T,
\abs{S}=k,\cS\cap\cS_{M}=\emptyset $ form an interlacing family (see Theorem
\ref{interlacing-families}). This implies that there exists  a subset $\cS_0$ such
that the smallest root of the characteristic polynomial of
$Y_{\cS_0}Y_{\cS_0}^T+MM^T$ is at least the smallest root of the \emph{expected
characteristic polynomial}, which is the certain sums of those characteristic
polynomials. Then, we need to present a  lower bound on the smallest root of the
expected characteristic polynomial. We do this by employing the method of the lower
barrier function argument \cite{BSS14,spielman2012an,spielman152,MSS2015FF}. Last but
not the least,  one can use a more generic way provided by the framework of
polynomial convolutions \cite{MSS2015FF} to establish the lower bound here.


\subsection{Organization}
The remainder of the paper is organized as follows. After introducing some
preliminaries in Section 2, we present  the proof of Theorem \ref{fix-selection} in
Section \ref{section-proof}.
 In Section  \ref{section-algorithm}, we finally provide  a deterministic    selection algorithm for computing the subset ${\mathcal S}_0$
  in   Theorem \ref{fix-selection}.

\section{Preliminaries}
\label{section-pre}
\subsection{Notations and Lemmas}
\label{section-pre1}

We use $\partial_{x}$ to denote the operator that performs differentiation with respect to $x$.
We say that a univariate polynomial is \emph{real-rooted} if all of its  coefficients
and roots are real. For a real-rooted polynomial $p$, we let $\lambda_{\min}(p)$ and
$\lambda_{\max}(p)$ denote the smallest and  largest root of $p$, respectively. We
use $\lambda_{k}(p)$ to denote the $k$th largest root of $p$. Let $\cS$ and
$\mathcal{K}$ be two sets; we use $\cS\setminus\mathcal{K}$ to denote the set of
elements in $\cS$ but not in $\mathcal{K}$. We use $\mathbb{E}$  to denote the
expectation of a random variable.

{\bf Singular Value Decomposition.} For a matrix $Q\in\mathbb{R}^{n\times m}$, we
denote the operator  norm and  Frobenius norm of $Q$ by $\|Q\|_2$ and $\|Q\|_F$,
respectively. The (thin) singular value decomposition (SVD) of
$Q\in\mathbb{R}^{n\times m}$ of rank $r=\mbox{rank}(Q)$ is $Q=U\Sigma V^T$, where
$\Sigma={\rm diag}(\sigma_1(Q),\ldots,\sigma_{r}(Q))\in \R^{r\times r}, U\in
\R^{n\times r}, V\in \R^{m\times r}$ such that $U^TU=I, V^TV=I$. For convenience, we
shall repeatedly use the column representation for the matrix $V$, i.e., $Y:=V^T$.
 The $\sigma_1(Q)\geq
\sigma_2(Q)\geq \ldots\geq \sigma_r(Q)$ are known as the singular values of $Q$. The
columns of $U$ and  columns of $V$ are called the left-singular vectors and
right-singular vectors of $Q$, respectively.
   A simple observation is  that $\|Q\|_2=\sigma_{1}(Q)$ and $\|Q\|_F=\sqrt{\sum\limits_{i=1}^r \sigma_i(Q)^2}$.

{\bf Moore-Penrose pseudo-inverse.} Suppose that $Q\in\mathbb{R}^{n\times m}$  and
its thin SVD is $Q=U\Sigma V^T$. We write $Q^{\dagger}=V\Sigma^{-1}U^T\in
\mathbb{R}^{m\times n}$ as the Moore-Penrose pseudo-inverse of $Q$, where
$\Sigma^{-1}$ is the inverse of $\Sigma$. It has the following properties.

\begin{lemma}[\cite{Berstein205}, Fact $6.4.12$]
\label{lem-subset}
Let $P\in \mathbb{R}^{m\times n}$ and $Q\in\mathbb{R}^{n\times \ell}$. If $\rank(P)=\rank(Q)=n$ or $QQ^T=I$, then $(PQ)^{\dagger}=Q^{\dagger}P^{\dagger}$.
\end{lemma}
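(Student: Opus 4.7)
The plan is to verify directly that $Y := Q^{\dagger}P^{\dagger}$ satisfies the four Moore--Penrose defining conditions with respect to $X := PQ$, namely $XYX=X$, $YXY=Y$, and the symmetry of both $XY$ and $YX$. Since the pseudoinverse is uniquely characterized by these four identities, establishing all four forces $Y=(PQ)^{\dagger}$. I would handle the two sufficient hypotheses separately, because the cancellation pattern of the middle factors differs in each case. The only external facts I shall invoke are the Penrose axioms applied to $P$ and $Q$ individually, i.e.\ $PP^{\dagger}P=P$, $P^{\dagger}PP^{\dagger}=P^{\dagger}$, the symmetry of $PP^{\dagger}$ and $P^{\dagger}P$, and the analogues for $Q$.

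Under the hypothesis $\rank(P)=\rank(Q)=n$, I would first record the consequences $P^{\dagger}P=I_{n}$ (since $P$ has full column rank) and $QQ^{\dagger}=I_{n}$ (since $Q$ has full row rank). With these two ``inner'' identities in hand, each of the four Penrose conditions reduces immediately: in $XYX$ and $YXY$ the middle products $QQ^{\dagger}$ and $P^{\dagger}P$ collapse to the identity, while $XY=PP^{\dagger}$ and $YX=Q^{\dagger}Q$, each of which is symmetric by the Penrose axioms applied to $P$ and $Q$ separately. This case should occupy only a few lines.

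Under the hypothesis $QQ^{T}=I_{n}$, the row-orthonormality of $Q$ implies $Q$ has full row rank, so the general formula $Q^{\dagger}=Q^{T}(QQ^{T})^{-1}$ collapses to $Q^{\dagger}=Q^{T}$. I can no longer claim $P^{\dagger}P=I$, so the strategy shifts: every interior occurrence of $QQ^{T}$ is replaced by $I_{n}$, which exposes a lone $PP^{\dagger}P$ or $P^{\dagger}PP^{\dagger}$ that the Penrose axioms for $P$ alone then absorb. The symmetry checks remain short, with $XY=PP^{\dagger}$ symmetric by Penrose for $P$ and $YX=Q^{T}(P^{\dagger}P)Q$ symmetric because the sandwiched $P^{\dagger}P$ is symmetric.

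There is no real technical obstacle; the verification is essentially bookkeeping. The one conceptual point worth keeping in mind is that the two hypotheses are sufficient but not necessary for $(PQ)^{\dagger}=Q^{\dagger}P^{\dagger}$, so I would resist any temptation to replace the direct Penrose check by an SVD argument that quietly assumes more than is on offer. In particular, in the second case no rank assumption on $P$ is ever used, and the entire weight of the argument rests on Penrose's axioms for $P$ combined with the single identity $QQ^{T}=I_{n}$.
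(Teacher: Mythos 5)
Your proposal is correct: in the first case the identities $P^{\dagger}P=I_n$ (full column rank of $P$) and $QQ^{\dagger}=I_n$ (full row rank of $Q$) make all four Penrose conditions for $Y=Q^{\dagger}P^{\dagger}$ collapse immediately, and in the second case $Q^{\dagger}=Q^{T}$ together with $QQ^{T}=I_n$ and the Penrose axioms for $P$ alone does the job, with the symmetry of $YX=Q^{T}(P^{\dagger}P)Q$ following from the symmetry of $P^{\dagger}P$. The paper itself gives no proof of this lemma --- it is quoted as Fact 6.4.12 from Bernstein's book --- so there is nothing to compare against; your direct verification of the four defining identities, using uniqueness of the Moore--Penrose pseudoinverse, is the standard and complete argument.
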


In general, $(PQ)^{\dagger}\neq Q^{\dagger}P^{\dagger}$ if $Q$ is not full rank. However, if $P$
is a nonsingular square matrix, the following lemma shows that $\|(PQ)^{\dagger}\|_F\leq\|Q^{\dagger}P^{-1}\|_F$. Lemma \ref{lem-moorep} is useful in our argument and we believe that it is of independent interest.
\begin{lemma}
\label{lem-moorep}
Let $P\in \mathbb{R}^{n\times n}$ be an invertible matrix. Then for any $Q\in\mathbb{R}^{n\times \ell}$,  $\|(PQ)^{\dagger}\|_F\leq\|Q^{\dagger}P^{-1}\|_F$.
\end{lemma}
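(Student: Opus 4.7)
The plan is to reduce the inequality to a classical extremal property of the Moore--Penrose pseudoinverse: among all $\{1\}$-inverses of a matrix $B$, i.e., matrices $X$ satisfying $BXB=B$, the pseudoinverse $B^{\dagger}$ has the smallest Frobenius norm. Once this is in place, I would apply it to $B=PQ$ and $X=Q^{\dagger}P^{-1}$, after verifying that the latter is in fact a $\{1\}$-inverse of the former.

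The verification is an immediate computation. Using the Moore--Penrose identity $QQ^{\dagger}Q=Q$ and the invertibility of $P$,
\[
(PQ)(Q^{\dagger}P^{-1})(PQ) \;=\; P(QQ^{\dagger}Q) \;=\; PQ,
\]
so $Q^{\dagger}P^{-1}$ is indeed a $\{1\}$-inverse of $PQ$. Notice that the hypothesis that $P$ is a square invertible matrix is used in two places: to make sense of $Q^{\dagger}P^{-1}$, and to produce the clean cancellation above, which would fail in general if one only assumed Lemma~\ref{lem-subset}'s rank hypotheses.

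For the extremal property, I would argue via the SVD. Let $B\in\mathbb{R}^{n\times \ell}$ have rank $r$ with thin SVD $B=U\Sigma V^T$, where $\Sigma\in\mathbb{R}^{r\times r}$ is diagonal and invertible, and extend $U,V$ to orthogonal matrices $\widetilde U=[U,U_{\perp}]$ and $\widetilde V=[V,V_{\perp}]$. Partition
\[
\widetilde V^{T} X \widetilde U \;=\; \begin{pmatrix} X_{11} & X_{12} \\ X_{21} & X_{22}\end{pmatrix}, \qquad X_{11}\in\mathbb{R}^{r\times r}.
\]
A short block computation gives $BXB=U\Sigma X_{11}\Sigma V^T$, so the condition $BXB=B$ forces $X_{11}=\Sigma^{-1}$. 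Because the Frobenius norm is invariant under multiplication by orthogonal matrices,
\[
\|X\|_F^2 \;=\; \sum_{i,j}\|X_{ij}\|_F^2 \;\geq\; \|X_{11}\|_F^2 \;=\; \|\Sigma^{-1}\|_F^2 \;=\; \|B^{\dagger}\|_F^2.
\]
Combining this with the first step yields $\|(PQ)^{\dagger}\|_F\leq \|Q^{\dagger}P^{-1}\|_F$.

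There is no significant technical obstacle here; the main conceptual step is recognizing the right generalization, namely the minimum-Frobenius-norm characterization of the pseudoinverse over all $\{1\}$-inverses. Everything else is routine block algebra using the orthogonal invariance of the Frobenius norm.
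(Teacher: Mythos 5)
Your proof is correct, and it takes a genuinely different route from the paper's. The paper sets $J:=PQ$, takes the SVD $J=U\Sigma V^{T}$, reduces the claim to $\|\Sigma^{\dagger}\|_F\le\|(P^{-1}U\Sigma)^{\dagger}P^{-1}U\|_F$, and then argues column by column: for $j\le r$ the consistent systems $\Sigma x=\ve_j$ and $P^{-1}U\Sigma x=P^{-1}U\ve_j$ have identical solution sets, so their minimum-norm solutions (the pseudoinverse images) coincide, and summing squared column norms over all $j$ gives the inequality. You instead isolate the general extremal fact that $B^{\dagger}$ has minimal Frobenius norm among all $\{1\}$-inverses $X$ of $B$ (i.e.\ those with $BXB=B$), prove it by the standard block decomposition $\widetilde V^{T}X\widetilde U$ forcing $X_{11}=\Sigma^{-1}$, and then check in one line that $Q^{\dagger}P^{-1}$ is a $\{1\}$-inverse of $PQ$. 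Both arguments rest on an extremal characterization of the pseudoinverse; yours is more modular and actually proves the slightly stronger statement that $\|(PQ)^{\dagger}\|_F\le\|X\|_F$ for \emph{any} $\{1\}$-inverse $X$ of $PQ$, whereas the paper's is more elementary and self-contained in that it never needs the notion of a generalized inverse. All the computations in your proposal check out: the cancellation $(PQ)(Q^{\dagger}P^{-1})(PQ)=P(QQ^{\dagger}Q)=PQ$ is valid, the block identity $BXB=U\Sigma X_{11}\Sigma V^{T}$ is right, and the orthogonal invariance step is sound.
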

\begin{proof}
Set $J:=PQ$. Then $Q=P^{-1}J$. It suffices to prove
\begin{equation}\label{eq:budeng}
\|J^{\dagger}\|_F^2\leq \|(P^{-1}J)^{\dagger} P^{-1}\|_F^2.
 \end{equation}
 Let $J=U\Sigma V^T$ be the singular value decomposition of $J$,  where $U\in\mathbb{R}^{n\times n}$ and $V\in\mathbb{R}^{\ell\times \ell}$ are two unitary matrices,
 and
$$
\Sigma=\left(
         \begin{array}{cc}
           D_r & 0 \\
           0 & 0\\
         \end{array}
       \right)\in \R^{n\times \ell}
$$
with $D_r=\mbox{diag}\big(\sigma_1(J),\ldots,\sigma_r(J)\big)$ and $r=\rank(J)$.
 Note that
\begin{equation*}\label{eq:deng2}
\begin{array}{ll}
\|(P^{-1}J)^{\dagger} P^{-1}\|_F&=\|(P^{-1}U\Sigma V^T)^{\dagger} P^{-1}\|_F=\|V(P^{-1}U\Sigma)^{\dagger} P^{-1}\|_F
\\
&=\|(P^{-1}U\Sigma)^{\dagger} P^{-1}\|_F=\|(P^{-1}U\Sigma)^{\dagger} P^{-1}U\|_F
\end{array}
\end{equation*}
and  $\|J^{\dagger}\|_F=\|\Sigma^{\dagger}\|_F$. To prove (\ref{eq:budeng}), it is
sufficient to show that
\begin{equation}\label{eq:bd}
\|\Sigma^{\dagger}\|_F\leq \|(P^{-1}U\Sigma)^{\dagger} P^{-1}U\|_F.
\end{equation}
 We use $\ve_j,j=1,\ldots, n,$ to denote a vector in $\R^n$
  whose $j$th entry is $1$ and other entries are $0$. Because  $P^{-1}U$ is invertible,
 the linear systems $\Sigma x=\ve_j$ and $P^{-1}U\Sigma x=P^{-1}U \ve_j$ have the
same solutions. Hence $\|\Sigma^{\dagger}\ve_j\|_2^2=\|(P^{-1}U\Sigma)^{\dagger}
P^{-1}U \ve_j\|_2^2$ for $j=1,\ldots,r$. This implies
$$
\begin{array}{ll}
\|\Sigma^{\dagger}\|_F^2=\sum\limits_{j=1}^r\|\Sigma^{\dagger}\ve_j\|_2^2&=\sum\limits_{j=1}^r\|(P^{-1}U\Sigma)^{\dagger} P^{-1}U \ve_j\|_2^2
\\
&\leq \sum\limits_{j=1}^n\|(P^{-1}U\Sigma)^{\dagger} P^{-1}U \ve_j\|_2^2
\\
&=\|(P^{-1}U\Sigma)^{\dagger} P^{-1}\|_F^2,
\end{array}
$$
and we arrive at  (\ref{eq:bd}) and hence (\ref{eq:budeng}).
\end{proof}

{\bf Jacobi's formula and Jensen's inequality.}
\begin{lemma}[Jacobi's formula]
\label{lemma-jacobi}
Let $P$ and $Q$ be two square matrices. Then,
$$
\partial_x\det[xP+Q]=\det[xP+Q]\cdot \mbox{{\rm Tr}}[P(xP+Q)^{-1}].
$$
\end{lemma}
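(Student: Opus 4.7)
The plan is to reduce Jacobi's formula to the first-order expansion of the determinant at the identity. Two ingredients suffice: the multiplicativity $\det(AB)=\det(A)\det(B)$ and the expansion $\det(I+tM)=1+t\,\mathrm{tr}(M)+O(t^2)$ as $t\to 0$.

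First I would fix a point $x_0\in\mathbb{R}$ at which $x_0 P+Q$ is invertible; since $\det(xP+Q)$ is a polynomial in $x$, the set of such $x_0$ is $\mathbb{R}$ minus a finite set, and both sides of the claimed identity are meaningful and continuous there. For small $t$ I would write
$$
\det\bigl((x_0+t)P+Q\bigr)=\det\bigl((x_0P+Q)+tP\bigr)=\det(x_0P+Q)\cdot\det\bigl(I+t(x_0P+Q)^{-1}P\bigr),
$$
apply the expansion $\det(I+tM)=1+t\,\mathrm{tr}(M)+O(t^2)$ with $M=(x_0P+Q)^{-1}P$, subtract $\det(x_0P+Q)$ from both sides, divide by $t$, and send $t\to 0$. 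This yields
$$
\partial_x\det(xP+Q)\big|_{x=x_0}=\det(x_0P+Q)\cdot\mathrm{tr}\bigl((x_0P+Q)^{-1}P\bigr)=\det(x_0P+Q)\cdot\mathrm{tr}\bigl(P(x_0P+Q)^{-1}\bigr),
$$
where the last equality uses the cyclic property of the trace. Since $x_0$ was arbitrary, this proves the formula on its natural domain.

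The only real content is the auxiliary expansion $\det(I+tM)=1+t\,\mathrm{tr}(M)+O(t^2)$; I do not expect any serious obstacle here. I would pass to $\mathbb{C}$, use the Schur decomposition $M=STS^{-1}$ with $T$ upper triangular, and expand
$$
\det(I+tM)=\det(I+tT)=\prod_{i=1}^{n}\bigl(1+t\,\lambda_i(M)\bigr),
$$
whose coefficient of $t$ is $\sum_i\lambda_i(M)=\mathrm{tr}(M)$. The entire argument is thus a routine first-order perturbation calculation, with the multiplicativity of the determinant doing the heavy lifting.
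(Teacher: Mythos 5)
Your argument is correct. The paper states this lemma without proof, citing it as the classical Jacobi formula, so there is no in-paper argument to compare against; your derivation via $\det\bigl((x_0+t)P+Q\bigr)=\det(x_0P+Q)\det\bigl(I+t(x_0P+Q)^{-1}P\bigr)$ together with the first-order expansion $\det(I+tM)=1+t\,\mathrm{tr}(M)+O(t^2)$ is the standard proof and is complete. The only cosmetic caveat is your remark that the set of invertibility points is ``$\mathbb{R}$ minus a finite set'': this presumes $\det(xP+Q)\not\equiv 0$, but in the degenerate case where it vanishes identically the right-hand side of the formula is undefined everywhere and the statement is vacuous, so nothing is lost.
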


We next introduce Jensen's inequality.
\begin{lemma}[Jensen's inequality]
\label{lemma-Jensen} Let $f$ be a function from $\mathbb{R}^n$ to
$(-\infty,+\infty]$. Then $f$ is concave if and only if
$$
\mu_1f(x_1)+\cdots+\mu_mf(x_m)\leq f(\mu_1x_1+\cdots+\mu_mx_m)
$$
whenever $\mu_1\geq0,\ldots,\mu_m\geq 0, \mu_1+\cdots+\mu_m=1$.
\end{lemma}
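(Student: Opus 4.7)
The plan is to prove the two directions of the equivalence separately: the ``only if'' direction by induction on the number of points $m$, and the ``if'' direction by simply specializing to $m=2$, which recovers the defining two-point concavity inequality.

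For the ``if'' direction, I would take $m=2$ and any $\mu \in [0,1]$ with $\mu_1 = \mu$, $\mu_2 = 1-\mu$. The assumed inequality then reads
\[
\mu f(x_1) + (1-\mu) f(x_2) \le f(\mu x_1 + (1-\mu) x_2),
\]
which is precisely the definition of concavity of $f$ on $\mathbb{R}^n$ with values in $(-\infty,+\infty]$.

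For the ``only if'' direction, I would argue by induction on $m$. The base case $m=1$ is trivial (it reads $f(x_1)\le f(x_1)$) and the case $m=2$ is the definition of concavity. Assume the inequality holds for any collection of $m-1$ points and nonnegative weights summing to $1$. Given weights $\mu_1,\ldots,\mu_m$ with $\sum_i \mu_i = 1$, set $\nu := \mu_1 + \cdots + \mu_{m-1} = 1 - \mu_m$. If $\nu = 0$, then $\mu_m = 1$ and all other weights vanish, and the inequality is trivial; otherwise, I rewrite
\[
\mu_1 x_1 + \cdots + \mu_m x_m = \nu \Bigl( \tfrac{\mu_1}{\nu} x_1 + \cdots + \tfrac{\mu_{m-1}}{\nu} x_{m-1}\Bigr) + \mu_m x_m,
\]
where $\tfrac{\mu_1}{\nu},\ldots,\tfrac{\mu_{m-1}}{\nu}$ are nonnegative and sum to $1$. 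Applying the two-point concavity inequality to the right-hand side and then the induction hypothesis to the inner convex combination yields
\[
f(\mu_1 x_1 + \cdots + \mu_m x_m) \ge \nu f\Bigl(\tfrac{\mu_1}{\nu} x_1 + \cdots + \tfrac{\mu_{m-1}}{\nu} x_{m-1}\Bigr) + \mu_m f(x_m) \ge \sum_{i=1}^{m} \mu_i f(x_i),
\]
which closes the induction.

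The only genuine delicacy is the extended-value codomain $(-\infty,+\infty]$: if some $f(x_i) = +\infty$ with $\mu_i > 0$, the right-hand sum is $+\infty$ and the inequality forces $f$ of the convex combination to be $+\infty$ as well; the concavity inequality used in the induction remains meaningful under the standard conventions $0 \cdot (+\infty) = 0$ and $a + (+\infty) = +\infty$ for $a \in (-\infty,+\infty]$, so no separate case analysis is really required. This is by far the only potential obstacle, and it is more notational than substantive; the core argument is a short induction.
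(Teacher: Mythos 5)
Your proof is correct: the paper states this lemma without proof (it is a standard fact from convex analysis, essentially Theorem 4.3 of Rockafellar's \emph{Convex Analysis} restated for concave functions), and your argument --- specializing to $m=2$ for the ``if'' direction and inducting on $m$ via the decomposition through $\nu = 1-\mu_m$ for the ``only if'' direction --- is the standard and complete way to establish it. Your remark on the extended-value conventions is also handled correctly, since the codomain $(-\infty,+\infty]$ excludes $-\infty$ and so no indeterminate sums can arise.
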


We also need the following lemma.

\begin{lemma}[\cite{Berstein205}, Fact $2.16.3$]
\label{lemma-invertible} If $Q\in {\mathbb R}^{n\times n}$ is an invertible matrix,
then for any vector $u\in {\mathbb R}^n$,
$$
\det[Q+uu^T]=\det[Q](1+u^TQ^{-1}u)=\det[Q](1+\mbox{{\rm Tr}}[Q^{-1}uu^T]).
$$
\end{lemma}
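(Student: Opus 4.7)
The plan is to reduce the left-hand side to a scalar by factoring out the invertible matrix $Q$ and then handling a rank-one perturbation of the identity. First, I would use the multiplicativity of the determinant together with the invertibility of $Q$ to write
$$
\det[Q + uu^T] = \det\big[Q(I + Q^{-1}uu^T)\big] = \det[Q] \cdot \det[I + Q^{-1}uu^T],
$$
so the task reduces to proving $\det[I + Q^{-1}uu^T] = 1 + u^T Q^{-1} u$.

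To evaluate this last determinant, I would invoke Sylvester's determinant identity $\det[I_n + AB] = \det[I_m + BA]$ with $A := Q^{-1}u \in \mathbb{R}^{n \times 1}$ and $B := u^T \in \mathbb{R}^{1 \times n}$. The right-hand side then collapses to a $1 \times 1$ determinant, giving $\det[I + Q^{-1}uu^T] = \det[1 + u^T Q^{-1} u] = 1 + u^T Q^{-1} u$. An equivalent, more pedestrian route is spectral: the matrix $Q^{-1}uu^T$ has rank at most one, so its only possibly nonzero eigenvalue equals its trace $u^T Q^{-1} u$; hence $I + Q^{-1}uu^T$ has eigenvalues $1 + u^T Q^{-1} u$ (once) and $1$ with multiplicity $n-1$, and taking the product yields the same value.

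For the second equality, since $u^T Q^{-1} u \in \mathbb{R}$ is a scalar it coincides with its own trace, and the cyclic property of the trace gives $u^T Q^{-1} u = \mbox{{\rm Tr}}[u^T Q^{-1} u] = \mbox{{\rm Tr}}[Q^{-1} u u^T]$, which combines with the first equality to yield $\det[Q + uu^T] = \det[Q]\big(1 + \mbox{{\rm Tr}}[Q^{-1}uu^T]\big)$.

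There is essentially no serious obstacle here — this is the classical matrix determinant lemma, and the only nontrivial ingredient is Sylvester's determinant identity (or, equivalently, the rank-one eigenvalue computation). If one prefers a fully self-contained route, the Schur-complement identity applied to the block matrix
$$
\begin{pmatrix} Q & -u \\ u^T & 1 \end{pmatrix}
$$
also works: eliminating the $(2,2)$ block first gives $\det[Q + uu^T]$, while eliminating the $(1,1)$ block first gives $\det[Q]\,(1 + u^T Q^{-1} u)$, and equating the two expressions produces the identity directly.
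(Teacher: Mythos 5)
Your proof is correct. The paper itself gives no argument for this lemma --- it is quoted verbatim as Fact $2.16.3$ of Bernstein's \emph{Matrix Mathematics} --- so there is no in-paper proof to compare against. Your route (factor out $\det[Q]$, reduce to $\det[I+Q^{-1}uu^T]$ via Sylvester's identity or the rank-one eigenvalue count, then use cyclicity of the trace for the second equality) is the standard proof of the matrix determinant lemma and every step is sound; the Schur-complement alternative you sketch is equally valid.
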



\subsection{Interlacing Families}
\label{section-pre2}
Our proof of Theorem \ref{fix-selection} builds on the method  of interlacing
families which is a powerful  technique developed in \cite{spielman151,spielman152}
by Marcus, Spielman, and Srivastava.

Let $g(x)=\alpha_0 \prod\limits_{i=1}^{n-1}(x-\alpha_i)$ and $f(x)=\beta_0\prod\limits_{i=1}^{n}(x-\beta_i)$ be two real-rooted polynomials. We say $g$ interlaces $f$ if
$$
\beta_1\leq \alpha_1\leq\beta_2\leq\alpha_2\leq \cdots\leq\alpha_{n-1}\leq \beta_n.
$$
We say that polynomials $f_1,\ldots,f_k$ have a \emph{common interlacing} if there is
a polynomial $g$ such that $g$ interlaces $f_i$ for each $i\in \{1,\ldots,k\}$. The
following lemma shows that the common interlacings are  equivalent to the
real-rootedness of convex combinations.
\begin{lemma}[\cite{chudnovsky2007}, Theorem $3.6$]
\label{interlacing-convex} Let $f_1,\ldots,f_m$ be real-rooted (univariate)
polynomials of  the same degree with positive leading coefficients. Then
$f_1,\ldots,f_m$ have a common interlacing if and only if
$\sum\limits_{i=1}^m\mu_if_i$ is real-rooted for all convex combinations $\mu_i\geq
0,\sum\limits_{i=1}^m\mu_i=1$.
\end{lemma}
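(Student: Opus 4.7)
The plan is to prove the equivalence by splitting into the two implications, with the forward direction following from an elementary sign-change argument and the reverse direction reducing to the classical Hermite--Kakeya--Obreschkoff (HKO) theorem.

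For the forward direction, assume $g$ of degree $n-1$ is a common interlacer with roots $\alpha_1\le\cdots\le\alpha_{n-1}$. I would observe that each $f_i$ must have exactly one root in each of the intervals $(-\infty,\alpha_1],[\alpha_1,\alpha_2],\ldots,[\alpha_{n-1},\infty)$, so the sign of $f_i(\alpha_j)$ depends only on $j$: because all $f_i$ have the same degree $n$ and positive leading coefficient, one has $(-1)^{n-j}f_i(\alpha_j)\ge 0$ uniformly in $i$. Consequently any convex combination $F=\sum_i\mu_i f_i$ satisfies $(-1)^{n-j}F(\alpha_j)\ge 0$, and combined with $F(x)\to+\infty$ as $x\to\pm\infty$, the intermediate value theorem yields at least $n$ real roots. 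Since $\deg F=n$, $F$ is real-rooted.

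For the reverse direction, I would proceed in two steps. First, apply HKO pairwise: the hypothesis that every convex combination $\mu f_i+(1-\mu)f_j$ with $\mu\in[0,1]$ is real-rooted is equivalent, by HKO, to $f_i$ and $f_j$ admitting a common interlacer. Writing the roots of $f_i$ in nondecreasing order as $\lambda_1^{(i)}\le\cdots\le\lambda_n^{(i)}$, pairwise common interlacing amounts to the inequalities $\lambda_k^{(i)}\le\lambda_{k+1}^{(j)}$ for every pair $i,j$ and every $k\in\{1,\ldots,n-1\}$. Second, taking the maximum over $i$ and minimum over $j$ I would obtain $\max_i\lambda_k^{(i)}\le\min_j\lambda_{k+1}^{(j)}$, pick any $\gamma_k$ in this nonempty interval, and form $g(x)=\prod_{k=1}^{n-1}(x-\gamma_k)$. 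By construction $g$ interlaces every $f_i$ simultaneously, giving the required common interlacer.

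The main obstacle is HKO itself, which is the substantive input in the reverse direction. Its standard proof tracks the roots of the one-parameter family $f_\mu=\mu f_i+(1-\mu)f_j$ as $\mu$ traverses $[0,1]$: under the real-rootedness hypothesis the roots of $f_\mu$ evolve continuously along $\mathbb{R}$ and cannot leave the real axis, so by analyzing their trajectories at the endpoints $\mu=0$ and $\mu=1$ one forces the roots of $f_i$ and $f_j$ to interleave in the required sense. Once HKO is granted, the combinatorial passage from pairwise to simultaneous common interlacing via the max/min bookkeeping above is routine, so essentially all the analytic difficulty is absorbed into HKO.
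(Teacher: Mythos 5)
The paper does not prove this lemma at all: it is quoted verbatim as Theorem 3.6 of Chudnovsky--Seymour, so there is no in-paper argument to compare against. Your proposal is a correct and essentially standard proof of the cited result, and both directions are structured the right way. Two points deserve care in a full writeup. First, in the forward direction the sign conditions $(-1)^{n-j}F(\alpha_j)\ge 0$ are only weak inequalities; if $F$ vanishes at several of the $\alpha_j$ the naive intermediate-value count of sign changes can collapse, so you need either a perturbation/limiting argument (real-rooted polynomials of fixed degree with positive leading coefficient form a closed family) or a multiplicity count at the $\alpha_j$ to certify $n$ real roots. Second, the ``HKO'' you invoke in the reverse direction is not the classical Hermite--Kakeya--Obreschkoff theorem, which characterizes real-rootedness of \emph{all} real linear combinations $\lambda f+\mu g$ and yields the stronger conclusion that $f$ and $g$ interlace each other; what you actually need is the convex-combination version (Fell's theorem, equivalently the $m=2$ case of the present lemma), which is exactly what your root-tracking sketch proves. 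With that substitution, your max/min construction of a simultaneous interlacer $g(x)=\prod_k(x-\gamma_k)$ from the pairwise inequalities $\lambda_k^{(i)}\le\lambda_{k+1}^{(j)}$ is correct and completes the argument.
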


The following lemma is also useful in our argument.
\begin{lemma}[\cite{spielman17}, Claim $2.9$]
\label{lemma-interlacing} If $Q\in \mathbb{R}^{n \times n}$ is a symmetric matrix and
$u_1,\ldots,u_m$ are vectors in $\mathbb{R}^n$, then the polynomials
$$
f_{j}(x)=\det\big[ x I-Q-u_ju_j^T \big],\quad j=1,\ldots,m
$$
have a common interlacing.
\end{lemma}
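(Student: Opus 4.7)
The plan is to invoke Lemma \ref{interlacing-convex}: since each $f_j(x)=\det[xI-Q-u_ju_j^T]$ is the characteristic polynomial of the symmetric matrix $Q+u_ju_j^T$, each $f_j$ is real-rooted, monic, and of degree $n$ (so leading coefficient $1>0$). Therefore, it suffices to show that every convex combination
\[
p(x) \;:=\; \sum_{j=1}^{m} \mu_j\,f_j(x), \qquad \mu_j\geq 0,\ \sum_j \mu_j=1,
\]
is real-rooted.

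To do this, I would fix $x$ outside the spectrum of $Q$ and apply Lemma \ref{lemma-invertible} to each $f_j$ with $uu^T=u_ju_j^T$, obtaining
\[
f_j(x) \;=\; \det[xI-Q]\bigl(1-u_j^\T(xI-Q)^{-1}u_j\bigr)
      \;=\; \det[xI-Q]\bigl(1-\mathrm{Tr}[(xI-Q)^{-1}u_ju_j^\T]\bigr).
\]
Summing with weights $\mu_j$ and setting $M:=\sum_{j=1}^{m}\mu_j u_j u_j^\T\succeq 0$ collapses the convex combination into a single trace:
\[
p(x) \;=\; \det[xI-Q]\bigl(1-\mathrm{Tr}[(xI-Q)^{-1}M]\bigr).
\]

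The next step is to recognize the right-hand side as a characteristic polynomial. Diagonalize $Q=U\Lambda U^\T$ and let $N:=U^\T M U$; then $N\succeq 0$, so $N_{ii}\geq 0$ for all $i$. Because $(xI-\Lambda)^{-1}$ is diagonal, only the diagonal of $N$ contributes to the trace, giving
\[
\mathrm{Tr}[(xI-Q)^{-1}M] \;=\; \sum_{i=1}^{n}\frac{N_{ii}}{x-\lambda_i}.
\]
Defining $v\in\mathbb{R}^n$ by $v_i:=\sqrt{N_{ii}}$ and applying Lemma \ref{lemma-invertible} in reverse with the symmetric matrix $\Lambda$ and the vector $v$, one obtains
\[
p(x) \;=\; \det[xI-\Lambda]\Bigl(1-\sum_{i=1}^n \tfrac{v_i^2}{x-\lambda_i}\Bigr) \;=\; \det\bigl[xI-\Lambda-vv^\T\bigr].
\]
Both sides are polynomials in $x$ and agree on the cofinite set where $xI-Q$ is invertible, hence everywhere. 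Since $\Lambda+vv^\T$ is symmetric, $p(x)$ is real-rooted, completing the proof via Lemma \ref{interlacing-convex}.

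There is no serious obstacle: the only subtlety is that the algebraic identities require $xI-Q$ to be invertible, which is handled by polynomial continuation across the finitely many eigenvalues of $Q$. The non-negativity of $N_{ii}$ (needed to take the square root defining $v$) is automatic from $M\succeq 0$, which in turn uses only $\mu_j\geq 0$. The argument reduces a statement about $m$ distinct rank-one updates to a single rank-one perturbation of a diagonal matrix, which is the conceptual heart of the proof.
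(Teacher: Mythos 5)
Your proof is correct. The paper itself does not prove this lemma --- it simply cites \cite{spielman17}, Claim $2.9$ --- and your argument is essentially the standard one from that reference: reduce common interlacing to real-rootedness of convex combinations via Lemma \ref{interlacing-convex}, collapse the convex combination into a single trace against $M=\sum_j\mu_j u_ju_j^{\top}\succeq 0$, and re-express the result as the characteristic polynomial of the single symmetric rank-one perturbation $\Lambda+vv^{\top}$. The only cosmetic point is that Lemma \ref{lemma-invertible} is stated for $\det[Q+uu^{\top}]$, whereas you need the variant $\det[P-uu^{\top}]=\det[P](1-u^{\top}P^{-1}u)$; this follows from the same matrix determinant lemma and does not affect the argument.
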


Following \cite{spielman17}, we define the notion of an interlacing family of polynomials as follows.
\begin{definition}[\cite{spielman17}, Definition 2.5]
\label{def-inter} An {\em interlacing family} consists of a finite rooted tree
$\mathbb{T}$ and a labeling of the nodes $v\in \mathbb{T}$ by monic real-rooted
polynomials $f_{v}(x)\in\mathbb{R}[x]$, with two properties:
\begin{itemize}
  \item[(a)] Every polynomial $f_{v}(x)$ corresponding to a non-leaf node $v$ is a convex combination of the polynomials corresponding to the children of $v$.
  \item[(b)] For all nodes $v_1,v_2\in \mathbb{T}$ with a common parent, the
      polynomials $f_{v_1}(x),f_{v_2}(x)$ have a common interlacing.
      \footnote{This condition is equivalent to  all convex combinations of all
      the children of a node being real-rooted;
  the equivalence is implied by Helly's theorem and Lemma \ref{interlacing-convex}. }
\end{itemize}
We say that a set of polynomials form an interlacing family if they are the labels of
the leaves of ${\mathbb T}$.
\end{definition}

\begin{figure}[H]
\centering
\includegraphics[width=0.55\textwidth]{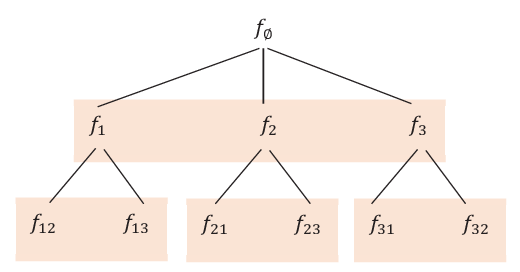}
\caption{A finite rooted tree with $f_{\emptyset}$ as its root. The orange blocks denote subsets of polynomials that have a common interlacing. For every fixed $i$ $(i=\emptyset, 1,2$ or $3)$, each polynomial $f_{i}$ is a convex combination of the polynomials $\{f_{ij}\}_{j\neq i}$. The polynomials $\{f_{ij}\}_{1\leq i\neq j\leq 3}$, which are the labels of the leaves of this tree, form an interlacing family. }
\label{figure:1}
\end{figure}

%

The following lemma, which was proved in \cite[Theorem $2.7$]{spielman17}, shows the
utility of forming an interlacing family.
\begin{lemma}[\cite{spielman17}, Theorem $2.7$]
\label{root-interlacing} Let $f$ be an interlacing family of degree $n$ polynomials
with root labeled by $f_{\emptyset}(x)$ and leaves by $\{f_{v}(x)\}_{v\in
\mathbb{T}}$. Then for all indices $1\leq j\leq n$, there exist leaves $v_1$ and
$v_2$ such that
$$
\lambda_{j}(f_{v_1})\geq \lambda_{j}(f_{\emptyset})\geq \lambda_{j}(f_{v_2}).
$$
\end{lemma}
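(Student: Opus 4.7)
The plan is to reduce the statement to a single ``one-level'' fact about convex combinations of polynomials sharing a common interlacing, and then to descend the tree $\mathbb{T}$ level by level. The target base case $(\star)$ reads: if monic degree-$n$ real-rooted polynomials $f_1,\dots,f_m$ share a common interlacing, and $f=\sum_i\mu_i f_i$ is a convex combination ($\mu_i\ge 0$, $\sum_i\mu_i=1$), then for every $j$ there exist indices $i^+$ and $i^-$ with $\lambda_j(f_{i^+})\ge \lambda_j(f)\ge \lambda_j(f_{i^-})$. Granted $(\star)$, the lemma follows by induction on the depth of $\mathbb{T}$: Definition \ref{def-inter}(a) presents $f_{\emptyset}$ as a convex combination of the polynomials at its children, and (b) says these children share a common interlacing, so $(\star)$ produces a child $u^+$ (resp.\ $u^-$) with $\lambda_j(f_{u^+})\ge \lambda_j(f_{\emptyset})$ (resp.\ $\le$). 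The subtree rooted at $u^+$ is itself an interlacing family, so iterating produces a leaf $v_1$ with $\lambda_j(f_{v_1})\ge \lambda_j(f_{\emptyset})$; the symmetric descent yields $v_2$.

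To prove $(\star)$, I would first invoke Lemma \ref{interlacing-convex} to conclude that $f$ is real-rooted and that any common interlacer $g$ of $f_1,\dots,f_m$ also interlaces $f$. Let $g$ have roots $\alpha_1\le\cdots\le\alpha_{n-1}$, and set $\alpha_0:=-\infty$, $\alpha_n:=+\infty$. For each $h\in\{f,f_1,\dots,f_m\}$, the interlacing forces $\lambda_j(h)\in I_j:=[\alpha_{n-j},\alpha_{n-j+1}]$. Put $x_0:=\lambda_j(f)\in I_j$. Writing the roots of $f_i$ in increasing order as $\beta_1^{(i)}\le\cdots\le\beta_n^{(i)}$, observe that $x_0\ge\alpha_{n-j}\ge\beta_k^{(i)}$ for $k\le n-j$, and $x_0\le\alpha_{n-j+1}\le\beta_k^{(i)}$ for $k\ge n-j+2$. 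Hence
\[
 f_i(x_0)=\prod_{k=1}^n(x_0-\beta_k^{(i)})=(-1)^{j-1}\,c_i\,\bigl(x_0-\lambda_j(f_i)\bigr),\qquad c_i:=\prod_{k\ne n-j+1}|x_0-\beta_k^{(i)}|\ge 0.
\]
Substituting into the identity $0=f(x_0)=\sum_i\mu_i f_i(x_0)$ and stripping the common factor $(-1)^{j-1}$ yields $\sum_i\mu_i c_i\bigl(x_0-\lambda_j(f_i)\bigr)=0$, a non-negative-weighted combination of the numbers $x_0-\lambda_j(f_i)$. This combination must then contain both a non-positive and a non-negative summand, which supplies the desired $i^\pm$.

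I expect the main obstacle to be the degenerate case in which every effective weight $\mu_i c_i$ vanishes; this can happen only when $x_0$ coincides with an endpoint of $I_j$, forcing two roots of some $f_i$ to collide. I would dispatch it either by a small generic perturbation of the $\mu_i$ (which moves $x_0$ into the interior of $I_j$, so the strict argument applies, followed by passage to the limit) or by a direct multiplicity count at the boundary. Both routes are routine once the factorization identity above is in hand; confirming that these boundary collisions do not compound as the induction descends through $\mathbb{T}$ is the only genuine bookkeeping burden.
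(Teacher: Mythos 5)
The paper does not prove this lemma; it is quoted verbatim from \cite[Theorem 2.7]{spielman17}, so there is no internal proof to compare against. Your argument is essentially the standard Marcus--Spielman--Srivastava one: reduce to the one-level statement $(\star)$ for a convex combination of polynomials with a common interlacing, prove it by a sign/factorization count at $x_0=\lambda_j(f)$ relative to the interlacer's roots, and descend the tree. The reduction, the localization $\lambda_j(h)\in I_j$, and the identity $f_i(x_0)=(-1)^{j-1}c_i(x_0-\lambda_j(f_i))$ with $c_i\ge 0$ are all correct. One small point you gloss over: Lemma \ref{interlacing-convex} does not literally say that a common interlacer of $f_1,\dots,f_m$ interlaces $f$; the clean fix is to note that every convex combination of $\{f_1,\dots,f_m,f\}$ is again a convex combination of the $f_i$, hence real-rooted, so by Lemma \ref{interlacing-convex} this enlarged family has a common interlacing $\tilde g$, and you should run the whole argument with $\tilde g$.

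The genuine soft spot is the degenerate case you flag, and of your two proposed repairs only one works. Perturbing the weights $\mu_i$ does \emph{not} in general move $x_0$ into the interior of $I_j$: if every $f_i$ with $\mu_i>0$ vanishes at the endpoint $x_0$, then so does every nearby convex combination, so the degeneracy is stable under perturbation of $\mu$. The multiplicity count, by contrast, does close the gap: assuming for contradiction that $\lambda_j(f_i)<x_0$ for all $i$ with $\mu_i>0$, each such $f_i$ has exactly $j-1$ roots $\ge x_0$, so $(-1)^{j-1}f_i(x_0)\ge 0$ with equality only if $f_i(x_0)=0$; since $f(x_0)=0$, every $f_i$ with positive weight has a root at $x_0$, and one divides out the common factor $(x-x_0)$ from all of them (which preserves the common interlacing and drops the relevant index from $j$ to $j-1$) and repeats, terminating in a strict sign contradiction. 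With that route spelled out, and the symmetric argument for $i^-$, the proof is complete; the tree descent itself is unproblematic because each subtree is again an interlacing family.
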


\subsection{Lower barrier function}
\label{section-pre3} In this section, we introduce the  lower barrier  function from
\cite{BSS14,spielman152}. For a real-rooted polynomial $p(x)$, one can use the
evolution of such a barrier function to track the approximate locations of the roots
of $\partial_x^kp(x)$.

\begin{definition}
For a real-rooted polynomial $p(x)$ with roots $\lambda_1,\ldots,\lambda_n$,   the
lower barrier function of $p(x)$ is defined as
$$
\Phi_{p}(x):=-\frac{p'(x)}{p(x)}=\sum\limits_{i=1}^n\frac{1}{\lambda_i-x}.
$$
\end{definition}

We have the following technical lemma for the lower barrier function, which can be
obtained by  Lemma $5.11$ in \cite{spielman152}.  Here, we include a novel proof for
completeness.

\begin{lemma}
\label{low-descent} Suppose that $p(x)$ is a real-rooted polynomial and $\delta>0$.
Suppose that $b< \lambda_{\min}(p)$ and
$$
\Phi_{p}(b)\leq \frac{1}{\delta}.
$$
Then
\begin{equation}
\label{low-des-r}
\Phi_{\partial_x p}(b+\delta)\leq \Phi_{p}(b).
\end{equation}
\end{lemma}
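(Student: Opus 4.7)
The plan is to express everything in terms of $u(x) := 1/\Phi_p(x)$, which converts the hypothesis $\Phi_p(b) \leq 1/\delta$ into the clean form $u(b) \geq \delta$. The starting point is the identity
$$
\Phi_{\partial_x p}(x) \;=\; \Phi_p(x) - \frac{\Phi_p'(x)}{\Phi_p(x)} \;=\; \frac{1 + u'(x)}{u(x)},
$$
obtained by differentiating the relation $p'(x) = -\Phi_p(x)\,p(x)$ once more and simplifying. In these coordinates the inequality (\ref{low-des-r}) becomes $u(b)(1+u'(b+\delta)) \leq u(b+\delta)$, i.e.
$$
u(b+\delta) - u(b) \;\geq\; u(b)\cdot u'(b+\delta).
$$

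The main analytic ingredient will be that $u$ is concave (and non-increasing) on $(-\infty,\lambda_{\min}(p))$. I would verify this by applying Cauchy-Schwarz to the vectors $\bigl((\lambda_i-x)^{-1/2}\bigr)_i$ and $\bigl((\lambda_i-x)^{-3/2}\bigr)_i$ to obtain $\Phi_p'(x)^2 \leq \tfrac12 \Phi_p(x)\Phi_p''(x)$, which plugged into
$$
u''(x) \;=\; \frac{2\Phi_p'(x)^2 - \Phi_p(x)\Phi_p''(x)}{\Phi_p(x)^3}
$$
gives $u''(x) \leq 0$. Concavity of $u$, combined with the mean value theorem on $[b,b+\delta]$, yields $u(b+\delta) - u(b) = \delta\cdot u'(c) \geq \delta\cdot u'(b+\delta)$ for some $c \in (b,b+\delta)$, since $u'$ is non-increasing.

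To close the argument, $u'(b+\delta) \leq 0$ because $\Phi_p$ is non-decreasing, and by hypothesis $u(b) \geq \delta > 0$. Multiplying the non-positive quantity $u'(b+\delta)$ by the larger factor $u(b) \geq \delta$ makes it no larger, so $u(b)\cdot u'(b+\delta) \leq \delta\cdot u'(b+\delta)$. Chaining with the concavity bound produces
$$
u(b)\cdot u'(b+\delta) \;\leq\; \delta\cdot u'(b+\delta) \;\leq\; u(b+\delta) - u(b),
$$
which is exactly the target inequality.

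The step that needs the most care is not any single estimate but the sign and domain bookkeeping: one must check that $b+\delta < \lambda_{\min}(p)$ so that $u$ is smooth and concave on $[b,b+\delta]$, which follows from $\Phi_p(b) \geq (\lambda_{\min}(p) - b)^{-1}$ combined with the hypothesis $\Phi_p(b) \leq 1/\delta$, and that $b+\delta < \lambda_{\min}(\partial_x p)$ so that $\Phi_{\partial_x p}(b+\delta)$ is indeed a genuine lower barrier value, which follows from root interlacing of $p$ and $p'$. Once those checks are in place, the chain of inequalities closes in a line.
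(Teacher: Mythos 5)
Your argument is correct, and it reaches the paper's conclusion by a genuinely different route. Both proofs start from the same identity $\Phi_{\partial_x p}=\Phi_p-\Phi_p'/\Phi_p$ and both ultimately rest on Cauchy--Schwarz plus the hypothesis $\Phi_p(b)\le 1/\delta$, but the paper then expands the reduced inequality $\Phi_p(b+\delta)\big(\Phi_p(b+\delta)-\Phi_p(b)\big)\le \Phi_p'(b+\delta)$ as explicit sums over the roots $\lambda_i$ and closes it by a partial-fraction manipulation together with Cauchy--Schwarz applied to those sums. You instead package the same Cauchy--Schwarz input once and for all as the statement that $u=1/\Phi_p$ is concave and non-increasing on $(-\infty,\lambda_{\min}(p))$ (your computation $(\Phi_p')^2\le\tfrac12\Phi_p\Phi_p''$ and the formula for $u''$ are both correct), after which the inequality $u(b)\,u'(b+\delta)\le\delta\,u'(b+\delta)\le u(b+\delta)-u(b)$ follows from the mean value theorem and the sign of $u'$. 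Your version is more conceptual and reusable --- concavity of the inverse barrier is exactly the structural fact hiding behind the paper's bare-hands computation --- while the paper's version avoids introducing $u$ and stays entirely at the level of elementary sums. The domain bookkeeping you flag is handled the same way in the paper: $b+\delta<\lambda_{\min}(p)\le\lambda_{\min}(\partial_x p)$ follows from $\tfrac{1}{\lambda_{\min}(p)-b}<\Phi_p(b)\le\tfrac1\delta$ (strict once $\deg p\ge 2$; the degree-one case is vacuous) together with Rolle's theorem, so your chain of inequalities is legitimate on all of $[b,b+\delta]$.
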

\begin{proof}
Suppose that the degree of $p$ is $d$ and its roots are
$\lambda_d\leq\cdots\leq\lambda_1$. According to the definition of $\Phi_p$, we have
$$
\frac{1}{\lambda_d-b}<\Phi_{p}(b)\leq \frac{1}{\delta},
$$
 which implies  $b+\delta<\lambda_d\leq \lambda_{\min}(\partial_x p)$.
 Here, we use $\lambda_d\leq \lambda_{\min}(\partial_x p)$, which can be obtained by
 Rolle's theorem.
Next, we  express $\Phi_{\partial_x p}$ in terms of $\Phi_{p}$ and $(\Phi_{p})'$:
$$
\Phi_{\partial_x p}=-\frac{p''}{p'}=-\frac{p''p-(p')^2}{p'p}-\frac{p'}{p}=-\frac{(\Phi_{p})^{'}}{\Phi_{p}}+\Phi_{p}.
$$
wherever all quantities are finite, which happens everywhere except  at the zeros of
$p$ and $p'$. Because $b+\delta$ is strictly below the zeros of both, it follows
that:
$$
\Phi_{\partial_x p}(b+\delta)=\Phi_{p}(b+\delta)-\frac{(\Phi_{p})'(b+\delta)}{\Phi_{p}(b+\delta)}.
$$
Therefore, \eqref{low-des-r} is equivalent to
$$
\Phi_{p}(b+\delta)-\Phi_{p}(b)\leq\frac{(\Phi_{p})'(b+\delta)}{\Phi_{p}(b+\delta)},
$$
i.e.,
$$
\Phi_{p}(b+\delta)\big( \Phi_{p}(b+\delta)-\Phi_{p}(b)\big)\leq(\Phi_{p})'(b+\delta).
$$
By expanding $\Phi_{p}$ and $(\Phi_{p})'$ in terms of the zeros of $p$,  we can see
that  \eqref{low-des-r} is equivalent to
\begin{equation}\label{eq:9t}
\bigg(\sum\limits_{i}\frac{1}{\lambda_i-b-\delta}\bigg)\bigg(\sum\limits_{i}\frac{1}{\lambda_i-b-\delta}-\sum\limits_{i}\frac{1}{\lambda_i-b}\bigg)\leq
\sum\limits_{i}\frac{1}{(\lambda_i-b-\delta)^2}.
\end{equation}
Noting that
$\frac{1}{\lambda_i-b-\delta}=\frac{1}{\lambda_i-b}+\frac{\delta}{(\lambda_i-b)(\lambda_i-b-\delta)}$,
we obtain that
$$
\begin{array}{ll}
\bigg(\sum\limits_{i}\frac{1}{\lambda_i-b-\delta}\bigg)\bigg(\sum\limits_{i}\frac{1}{\lambda_i-b-\delta}&-\sum\limits_{i}\frac{1}{\lambda_i-b}\bigg)
=\bigg(\sum\limits_{i}\frac{1}{\lambda_i-b}+\sum\limits_{i}\frac{\delta}{(\lambda_i-b)(\lambda_i-b-\delta)}\bigg)\bigg(\sum\limits_{i}\frac{\delta}{(\lambda_i-b)(\lambda_i-b-\delta)}\bigg)
\\
&=\bigg(\sum\limits_{i}\frac{\delta}{(\lambda_i-b)(\lambda_i-b-\delta)}\bigg)^2+\bigg(\sum\limits_{i}\frac{\delta}{\lambda_i-b}\bigg)\bigg(\sum\limits_{i}\frac{1}{(\lambda_i-b)(\lambda_i-b-\delta)}\bigg)
\\
&\leq\bigg(\sum\limits_{i}\frac{\delta}{\lambda_i-b}\bigg)\bigg(\sum\limits_{i}\frac{\delta}{(\lambda_i-b)(\lambda_i-b-\delta)^2}\bigg)+\sum\limits_{i}\frac{1}{(\lambda_i-b)(\lambda_i-b-\delta)}
\\
&\leq \sum\limits_{i}\frac{\delta}{(\lambda_i-b)(\lambda_i-b-\delta)^2}+\sum\limits_{i}\frac{1}{(\lambda_i-b)(\lambda_i-b-\delta)}
\\
&=\sum\limits_{i}\frac{1}{(\lambda_i-b-\delta)^2}-\sum\limits_{i}\frac{1}{(\lambda_i-b)(\lambda_i-b-\delta)}+\sum\limits_{i}\frac{1}{(\lambda_i-b)(\lambda_i-b-\delta)}
\\
&=\sum\limits_{i}\frac{1}{(\lambda_i-b-\delta)^2},
\end{array}
$$
which implies (\ref{eq:9t}) and hence (\ref{low-des-r}). Here, the first and  second
inequalities follow from $\Phi_{p}(b)\leq \frac{1}{\delta}$, i.e.,
$\sum\limits_{i}\frac{\delta}{\lambda_i-b}\leq 1$ and the Cauchy-Schwarz inequality,
respectively.
\end{proof}


\section{Proof of Theorem \ref{fix-selection}}
\label{section-proof}

In this section, we present the proof of  Theorem \ref{fix-selection}.  Our proof
provides a deterministic greedy algorithm that will be presented in Section
\ref{section-algorithm}. To state our proof, we introduce the following result but
postpone   its proof till the end of this section.
\begin{theorem}
\label{fix-theorem-r1} Let $Y\in {\mathbb R}^{n\times(m+\ell)}$ that satisfies
$YY^T=I$. Assume that $\cS_{M}\subseteq[m+\ell]$ with $|\cS_{M}|=\ell$. Let
$M:=Y_{\cS_M}\in\mathbb{R}^{n\times \ell}$ be a sub-matrix of $Y$ whose columns are
indexed by $\cS_{M}$. Set $r:=rank(M)$. Then for any fixed $k\in [n-r, m-1]$ there
exists a subset $\cS_0\subseteq[m+\ell]\setminus \cS_{M}$ with size $k$ such that
$$
\sigma_{\min}\big(\big[M \ Y_{\cS_0}\big]\big)^2\geq \Gamma^{-1}(m,n,k,r)\cdot\frac{m-n+r}{m-n+\|M^{\dagger}\|_F^2},
$$
where $\Gamma(m,n,k,r)$ is defined in \eqref{gamma}.
\end{theorem}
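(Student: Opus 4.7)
The plan is to follow the interlacing-families paradigm set up in Section \ref{section-pre2}. Since $[M\ Y_{\cS_0}][M\ Y_{\cS_0}]^T=MM^T+Y_{\cS_0}Y_{\cS_0}^T$, one has $\sigma_{\min}\bigl([M\ Y_{\cS_0}]\bigr)^2=\lambda_{\min}(H+Y_{\cS_0}Y_{\cS_0}^T)$ with $H:=MM^T$; writing $R:=[m+\ell]\setminus\cS_M$ (a set of cardinality $m$), it suffices to exhibit $\cS_0\subseteq R$ of size $k$ for which the smallest root of the characteristic polynomial $f_{\cS_0}(x):=\det[xI-H-Y_{\cS_0}Y_{\cS_0}^T]$ is at least the asserted quantity.

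First I would arrange the polynomials $\{f_\cS:\cS\subseteq R,\,|\cS|=k\}$ as the leaves of a rooted tree built by adding columns one at a time; each internal node, indexed by a partial choice $\mathcal{K}\subseteq R$, is labelled by the uniform average of the leaves beneath it. Applying Lemma \ref{lemma-interlacing} level by level (at each depth the newly added column appears as the rank-one update, with $Q$ the characteristic-polynomial argument at the preceding level) and combining with Lemma \ref{interlacing-convex}, the children of every node share a common interlacer, so the tree forms an interlacing family in the sense of Definition \ref{def-inter} and Lemma \ref{root-interlacing} produces a leaf $\cS_0$ with $\lambda_{\min}(f_{\cS_0})\geq\lambda_{\min}(f_\emptyset)$. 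To get a handle on the mixed polynomial $f_\emptyset$ I would write $f_\cS(x)=\det(xI-H)\det[I-Y_\cS^T(xI-H)^{-1}Y_\cS]$ off the spectrum of $H$, expand the $k\times k$ determinant into principal minors, interchange the summation over $\cS$ with the one over $\mathcal{T}\subseteq\cS$, and finally invoke Cauchy--Binet together with $Y_RY_R^T=I-H$ (a consequence of $YY^T=I$) to arrive at
\[
\binom{m}{k}\,f_\emptyset(x)=\sum_{j=0}^{\min(k,n)}(-1)^j\binom{m-j}{k-j}\sum_{|\mathcal{J}|=j}\prod_{i\in\mathcal{J}}(1-\lambda_i)\prod_{i\notin\mathcal{J}}(x-\lambda_i),
\]
where $\lambda_1,\ldots,\lambda_r$ are the nonzero eigenvalues of $H$ and $\lambda_{r+1}=\cdots=\lambda_n=0$. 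This expression can be repackaged as a univariate coefficient extraction on the two-variable generating polynomial $p(x,z)=\prod_{i=1}^n[(x-\lambda_i)+z(1-\lambda_i)]$, which is the form needed for the barrier machinery.

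To lower bound $\lambda_{\min}(f_\emptyset)$ I would run the lower-barrier argument of Section \ref{section-pre3}: start from a candidate point $b$ strictly below the smallest root of the base polynomial and iterate Lemma \ref{low-descent} once per differentiation step in the operator producing $f_\emptyset$ from $p$, at each step choosing the shift $\delta$ greedily so that the invariant $\Phi(b)\leq 1/\delta$ is preserved throughout. The initial barrier value naturally splits into a contribution from the $n-r$ zero eigenvalues of $H$, controlled by $(n-r)/b$, and a contribution from the $r$ nonzero eigenvalues that carries the exact quantity $\sum_{i=1}^r 1/\lambda_i=\|M^{\dagger}\|_F^2$; optimising the accumulated shift over $b$ should reproduce the asserted lower bound $\Gamma^{-1}(m,n,k,r)\cdot(m-n+r)/(m-n+\|M^{\dagger}\|_F^2)$. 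The main obstacle I anticipate is precisely this final barrier calculation: aligning the two separate sources of the initial barrier so that they combine into the specific rational function $\Gamma(m,n,k,r)$, and verifying that the greedy choice of $\delta$ is simultaneously optimal for both blocks, requires delicate bookkeeping; an alternative path, noted at the end of Section 1.4, is to derive the same bound via the polynomial-convolution framework of \cite{MSS2015FF}, which packages the necessary shifts algebraically.
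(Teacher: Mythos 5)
Your overall strategy is the same as the paper's (interlacing family plus a barrier-function lower bound on $\lambda_{\min}(f_\emptyset)$), and your Cauchy--Binet derivation of $\binom{m}{k}f_\emptyset$ is a correct alternative to the paper's recursive computation. However, the step where you conclude that the tree is an interlacing family has a genuine gap. Lemma \ref{lemma-interlacing} gives a common interlacing for the one-step polynomials $p^M_{\mathcal{K}\cup\{i\}}(x)=\det[xI-MM^T-Y_{\mathcal{K}}Y_{\mathcal{K}}^T-y_iy_i^T]$, but the children of the node $\mathcal{K}$ are labelled by the \emph{averages} $f^M_{\mathcal{K}\cup\{i\}}$ over all size-$k$ completions, and a convex combination $\sum_i\mu_i f^M_{\mathcal{K}\cup\{i\}}$ is not a convex combination of polynomials to which Lemma \ref{lemma-interlacing} applies (a single leaf $\cS$ lies below several children, so the induced weights on leaves are not of the required rank-one-update form). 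Condition (b) of Definition \ref{def-inter} therefore does not follow ``level by level'' as you assert. The missing ingredient is an identity $f^M_{\mathcal{T}}=L_{|\mathcal{T}|}[\,p^M_{\mathcal{T}}\,]$ with $L_t$ a \emph{fixed linear operator that preserves real-rootedness}; then $\mu f^M_{\mathcal{T}\cup\{i\}}+(1-\mu)f^M_{\mathcal{T}\cup\{j\}}=L_{t+1}[\mu p^M_{\mathcal{T}\cup\{i\}}+(1-\mu)p^M_{\mathcal{T}\cup\{j\}}]$ and real-rootedness transfers. The paper supplies exactly this via Lemmas \ref{lemma-5} and \ref{f-expression}: $f^M_{\mathcal{T}}(x)=\frac{(m-k)!}{(m-t)!}(x-1)^{-(m-n-k)}\partial_x^{k-t}(x-1)^{m-n-t}p^M_{\mathcal{T}}(x)$. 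You compute $f_\emptyset$ only, and in a signed-sum form from which preservation of real-rootedness is not apparent; you need the formula for every partial assignment $\mathcal{T}$, not just the root.

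The second gap is quantitative: the barrier computation, which is the heart of the bound, is deferred, and your description of the initial barrier omits the contribution of the $(x-1)^{m-n}$ factor (equivalently, of the weights $\binom{m-j}{k-j}$ in your sum), namely the term $\frac{m-n}{1-b}$ in $\Phi_p(b)$. This term is essential: combined with a concavity (Jensen) step it is what merges with $\sum_{i=1}^r\frac{1}{\sigma_i^2-b}$ to produce the quantity $\Lambda=\frac{m-n+r}{\,m-n+\|M^\dagger\|_F^2\,}$ and, after optimizing, the factor $\Gamma^{-1}(m,n,k,r)$. Tracking only the ``$n-r$ zero eigenvalues'' and the ``$r$ nonzero eigenvalues'' of $MM^T$ cannot yield a bound containing $m-n$. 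A smaller remark: no per-step greedy choice of $\delta$ is needed --- Lemma \ref{low-descent} preserves the invariant $\Phi\leq 1/\delta$ for a single fixed $\delta$ across all $k$ differentiations, so one applies it $k$ times with the same $\delta$, obtains $\lambda_{\min}\geq b+(k+1)\delta$, and optimizes over $\delta$ once at the end, which is how the paper arrives at $\Gamma^{-1}(m,n,k,r)\cdot\Lambda$.
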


Using  Theorem \ref{fix-theorem-r1}, we next present the proof of Theorem
\ref{fix-selection}.
\begin{proof}[Proof of Theorem \ref{fix-selection}]
Let $[A \ \, B]=U\Sigma Y$ be the SVD of $[A \ \ B]$.
 Suppose that $\cS_A\subseteq [m+\ell]$ and $\cS_B\subseteq [m+\ell]$ are two indexed sets such that
 $A=U\Sigma Y_{\cS_A}$ and $B=U\Sigma Y_{\cS_B}$.

 Recall that $\mbox{rank}\big([A \ B]\big)=n$, which implies that $YY^T=I_n$.  Applying Theorem \ref{fix-theorem-r1}
  with $M:=Y_{\cS_A}$, we obtain that there exists a subset $\mathcal{S}_0\subseteq [m+\ell]\setminus\cS_A$ with size $k\in [n-r, m-1]$ such that
\begin{equation}
\label{main-proof-2}
\sigma_{\min}\big(\big[Y_{\cS_A} \ \, Y_{\cS_0}\big]\big)^2\geq \Gamma^{-1}(m,n,k,r)\cdot\frac{m-n+r}{m-n+\|(Y_{\cS_A})^{\dagger}\|^2_F}.
\end{equation}
Considering the left side of \eqref{main-result-bound}, we have
\begin{equation}
\label{main-proof-1}
\big\|\big(\big[A \ \, B_{\mathcal{S}_0}\big]\big)^{\dagger}\big\|_{\xi}^2=\big\|\big( [U\Sigma Y_{\cS_A} \ \, U\Sigma Y_{\cS_0}]\big)^{\dagger}\big\|_{\xi}^2=\big\|\big(U\Sigma ([Y_{\cS_A} \  \, Y_{\cS_0}])\big)^{\dagger}\big\|_{\xi}^2.
\end{equation}
From \eqref{main-proof-2}, we know  that the matrix $[Y_{\cS_A} \ \, Y_{\cS_0}\big]$
has full row rank. Since $U\Sigma$ also has full column rank, by Lemma
\ref{lem-subset} we know that $\mbox{rank}([A \ \, B_{\mathcal{S}_0}])=n$ and
\begin{equation}
\label{main-proof-3}
\begin{array}{ll}
\big\|\big(U\Sigma ([Y_{\cS_A} \ \, Y_{\cS_0}])\big)^{\dagger}\big\|_{\xi}^2&=\big\|([Y_{\cS_A} \ \, Y_{\cS_0}])^{\dagger}\Sigma^{-1}U^T\big\|_{\xi}^2
\\[2mm]
&\overset{(a)}{\leq} \big\|([Y_{\cS_A} \ \, Y_{\cS_0}])^{\dagger} \big\|_{2}^2\cdot \big\|([A \ \, B])^{\dagger}\big\|_{\xi}^2
\\[2mm]
&\overset{(b)}{\leq}\Gamma(m,n,k,r)\cdot \frac{m-n+\|(Y_{\cS_A})^{\dagger}\|^2_F}{m-n+r}\cdot \big\|([A \ \, B])^{\dagger}\big\|_{\xi}^2,
\end{array}
\end{equation}
where $(a)$ follows from the standard properties of matrix norms and using the
definition of the pseudoinverse of $[A \ \, B]$ and $\Sigma$, and $(b)$ follows from
\eqref{main-proof-2}. To complete the proof, we still need to present an upper bound
on $\|(Y_{\cS_A})^{\dagger}\|^2_F$. Note that
\begin{equation}
\label{proof-main-111}
\begin{array}{ll}
\|(Y_{\cS_A})^{\dagger}\|^2_F\overset{(a)}{=}\|(\Sigma^{-1}U^TA)^{\dagger} \|^2_F & \overset{(b)}{\leq}\|A^{\dagger}U\Sigma \|^2_F
\\
&\overset{(c)}{=}\|A^{\dagger}U\Sigma YY^T \|^2_F
\\
&\overset{(d)}{\leq}\|A^{\dagger}[A \ \, B] \|^2_F
\\
&\overset{(e)}{=}r+\|A^{\dagger}B\|^2_F,
\end{array}
\end{equation}
where $(a)$ follows from $A=U\Sigma Y_{\cS_A}$, $(b)$ follows from Lemma
\ref{lem-moorep}, $(c)$ follows from $YY^T=I$, $(d)$ follows from $U\Sigma Y=[A \ \,
B]$, the standard properties of matrix norms, and $\|Y^T\|_2\leq 1$, and $(e)$
follows from $\mbox{rank}(A)=r$. Thus, combining \eqref{main-proof-1},
\eqref{main-proof-3}, and \eqref{proof-main-111},  we arrive at
\eqref{main-result-bound}.
\end{proof}

The remainder  of this section aims to prove  Theorem \ref{fix-theorem-r1}. The proof
consists of two  parts. We first prove that the characteristic polynomials of the
matrices that arise in Theorem \ref{fix-theorem-r1} form an interlacing family.
Secondly, we use the barrier function argument to establish a lower bound on the
smallest zero of the expected characteristic polynomial.

\subsection{An interlacing family for subset selection}

Let the columns of $Y$ be the vectors $y_1,\ldots,y_{m+\ell}\in \mathbb{R}^n$,  and
let $M=Y_{\cS_M}$ be a given matrix with $\mbox{rank}(M)=r$. Since $YY^T=I$, we
obtain that $\sum\limits_{i=1}^{m+\ell}y_iy_i^T=I$. Denote the nonzero singular
values of $M$ as $\sigma_1(M),\ldots,\sigma_r(M)$. For each
$\cS\subseteq[m+\ell]\setminus\cS_{M}$, set
$$
 p^M_{\cS}(x):=\det[xI-Y_{\cS}Y^T_{\cS}-MM^T].
$$
For a fixed set $\mathcal{T}$ with size less than $k$, we define the polynomial
\begin{equation}\label{eq:deff}
f^M_{\mathcal{T}}(x):=\mathop{\mathbb{E}}\limits_{\cS\supseteq\mathcal{T},|\cS|=k \atop \cS\subseteq[m+\ell]\setminus\cS_{M}}p^M_{\cS}(x),
\end{equation}
where the expectation is taken uniformly over  sets
$\cS\subseteq[m+\ell]\setminus\cS_{M}$ with size $k$ containing $\mathcal{T}$.
Building on the ideas of Marcus-Spielman-Srivastava  \cite{spielman17},  we can
derive expressions for the polynomials $f^M_{\mathcal{T}}(x)$.

 We begin with the following result.

\begin{lemma}
\label{lemma-5}
Suppose that $p^M_{\cS}(x)=\det[xI-Y_{\cS}Y^T_{\cS}-MM^T]$.
Then
$$
\sum\limits_{i\notin \mathcal{S}\cup \cS_{M}}p^M_{\mathcal{S}\cup\{i\}}(x)=(x-1)^{-(m-n-t-1)}\partial_x(x-1)^{m-n-t}p^M_{\mathcal{S}}(x)
$$
holds for every subset $\mathcal{S}\subseteq[m+\ell]\setminus \cS_{M}$ with size $t$.
\end{lemma}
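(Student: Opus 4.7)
The plan is to expand each summand $p^M_{\mathcal{S}\cup\{i\}}$ as a rank-one perturbation of $p^M_{\mathcal{S}}$, collapse the resulting sum using the isotropy hypothesis $YY^T = I$, and then recognize the expression via Jacobi's formula as a logarithmic derivative in $x$. Concretely, set $Q(x) := xI - Y_{\mathcal{S}}Y_{\mathcal{S}}^T - MM^T$ so that $p^M_{\mathcal{S}}(x) = \det Q(x)$. For each $i \notin \mathcal{S}\cup\cS_M$ the matrix inside $p^M_{\mathcal{S}\cup\{i\}}(x)$ is exactly $Q(x) - y_iy_i^T$, and the matrix determinant lemma (Lemma \ref{lemma-invertible}, applied to the rank-one update $-y_iy_i^T$) gives
\[
p^M_{\mathcal{S}\cup\{i\}}(x) \;=\; p^M_{\mathcal{S}}(x)\bigl(1 - \mathrm{Tr}[Q(x)^{-1} y_iy_i^T]\bigr).
\]

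Next I would sum over $i \in [m+\ell]\setminus(\mathcal{S}\cup\cS_M)$. Since $\mathcal{S}\cap\cS_M = \emptyset$ and $|\cS_M| = \ell$, the sum has exactly $(m+\ell) - t - \ell = m - t$ terms. The hypothesis $YY^T = I$, together with $M = Y_{\cS_M}$, yields
\[
\sum_{i \notin \mathcal{S}\cup \cS_M} y_iy_i^T \;=\; I - Y_{\mathcal{S}}Y_{\mathcal{S}}^T - MM^T \;=\; Q(x) - (x-1)I,
\]
so pulling $\mathrm{Tr}[Q(x)^{-1}\,\cdot\,]$ through this identity gives $n - (x-1)\mathrm{Tr}[Q(x)^{-1}]$. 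Summing the previous display therefore produces
\[
\sum_{i \notin \mathcal{S}\cup \cS_M} p^M_{\mathcal{S}\cup\{i\}}(x) \;=\; p^M_{\mathcal{S}}(x)\Bigl[(m-n-t) + (x-1)\,\mathrm{Tr}[Q(x)^{-1}]\Bigr].
\]

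Finally, Jacobi's formula (Lemma \ref{lemma-jacobi} with $P = I$) yields $\partial_x p^M_{\mathcal{S}}(x) = p^M_{\mathcal{S}}(x)\,\mathrm{Tr}[Q(x)^{-1}]$, so the bracket above collapses to $(m-n-t)\,p^M_{\mathcal{S}}(x) + (x-1)\,\partial_x p^M_{\mathcal{S}}(x)$. A single application of the product rule identifies this with $(x-1)^{-(m-n-t-1)}\partial_x\bigl[(x-1)^{m-n-t}p^M_{\mathcal{S}}(x)\bigr]$, which is the claimed identity. The only real obstacle is bookkeeping: correctly counting the summation range as $m-t$ (using disjointness of $\mathcal{S}$ and $\cS_M$) and rewriting the isotropy identity in the clean form $Q(x) - (x-1)I$ so that the trace cleanly separates into the constant $n$ plus a multiple of the logarithmic derivative of $p^M_{\mathcal{S}}$.
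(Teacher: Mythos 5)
Your proposal is correct and follows essentially the same route as the paper's proof: the rank-one determinant update (Lemma \ref{lemma-invertible}), the isotropy identity $\sum_{i\notin\mathcal{S}\cup\cS_M}y_iy_i^T=I-Y_{\mathcal{S}}Y_{\mathcal{S}}^T-MM^T$ with the count $m-t$, Jacobi's formula, and the product rule. Writing the identity as $Q(x)-(x-1)I$ is just a cleaner phrasing of the paper's decomposition $(I-xI)+(xI-Y_{\mathcal{S}}Y_{\mathcal{S}}^T-MM^T)$; there is no substantive difference.
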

\begin{proof}
According to Lemma \ref{lemma-invertible}, we have
$$
\begin{array}{ll}
\sum\limits_{i\notin \mathcal{S}\cup \cS_{M}}p^M_{\mathcal{S}\cup\{i\}}(x)&=\sum\limits_{i\notin \mathcal{S}\cup \cS_{M}}\det\big[xI-Y_{\cS}Y^T_{\cS}-y_iy_i^T-MM^T\big]
\\[2mm]
&=\sum\limits_{i\notin \mathcal{S}\cup \cS_{M}}\det\big[xI-Y_{\cS}Y^T_{\cS}-MM^T\big]\big(1-\mbox{Tr}\big[(xI-Y_{\cS}Y^T_{\cS}-MM^T)^{-1}y_iy_i^T\big]\big).
\end{array}
$$
Because $p^M_{\mathcal{S}}(x)=\det[xI-Y_{\cS}Y^T_{\cS}-MM^T]$ and
$\sum\limits_{i\notin \mathcal{S}\cup \cS_{M}}y_iy_i^T=I-Y_{\cS}Y^T_{\cS}-MM^T$, we
obtain that
$$
\begin{array}{ll}
\sum\limits_{i\notin \mathcal{S}\cup \cS_{M}}p^M_{\mathcal{S}\cup\{i\}}(x)&=p^M_{\mathcal{S}}(x)\big(m-t-\mbox{Tr}\big[(xI-Y_{\cS}Y^T_{\cS}-MM^T)^{-1}(I-Y_{\cS}Y^T_{\cS}-MM^T)\big]\big)
\\[2mm]
&=p^M_{\mathcal{S}}(x)(m-t)-p^M_{\mathcal{S}}(x)\mbox{Tr}\big[\big(xI-Y_{\cS}Y^T_{\cS}-MM^T\big)^{-1}
\\[1.0mm]
& \ \ \ \times \big((I-xI)+(xI-Y_{\cS}Y^T_{\cS}-MM^T)\big)\big]
\\[2mm]
&=p^M_{\mathcal{S}}(x)(m-t)-np^M_{\mathcal{S}}(x)-p^M_{\mathcal{S}}(x)\mbox{Tr}\big[(xI-Y_{\cS}Y^T_{\cS}-MM^T)^{-1}(I-xI)\big]
\\[2mm]
&=p^M_{\mathcal{S}}(x)(m-t-n)-p^M_{\mathcal{S}}(x)\mbox{Tr}\big[(xI-Y_{\cS}Y^T_{\cS}-MM^T)^{-1}\big](1-x)
\\[2mm]
&\overset{(a)}{=}(m-n-t)p^M_{\mathcal{S}}(x)+(x-1)\partial_xp^M_{\mathcal{S}}(x)
\\[2mm]
&=(x-1)^{-(m-n-t-1)}\partial_x(x-1)^{m-n-t}p^M_{\mathcal{S}}(x),
\end{array}
$$
where $(a)$ follows from Lemma \ref{lemma-jacobi}.
\end{proof}

Motivated by Lemma $5.3$ in \cite{spielman17}, we     give expressions for
$f^M_{\mathcal{T}}(x)$ in the following lemma:
\begin{lemma}
\label{f-expression}
Suppose that  $\mathcal{T}\subseteq[m+\ell]\setminus \cS_{M}$ with size $t\leq k$.
Then
\begin{displaymath}
f_{\mathcal{T}}^M(x)=\frac{1}{{m-t \choose k-t}
}\sum\limits_{\cS\supseteq\mathcal{T},|\cS|=k \atop
\cS\subseteq[m+\ell]\setminus\cS_{M}}p_{\cS}^M(x)=\frac{(m-k)!}{(m-t)!}(x-1)^{-(m-n-k)}\partial_x^{k-t}(x-1)^{m-n-t}p^M_{\mathcal{T}}(x),
\end{displaymath}
where $ p^M_{\mathcal T}(x)=\det[xI-Y_{\mathcal T}Y^T_{\mathcal T}-MM^T]$. In particular,
$$
f^M_{\emptyset}(x)=\frac{(m-k)!}{m!}(x-1)^{-(m-n-k)}\partial_x^{k}(x-1)^{m-n}x^{n-r}\prod\limits_{i=1}^r\big(x-\sigma_i(M)^2\big).
$$
\end{lemma}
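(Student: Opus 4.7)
The plan is to establish the first equality from the definition of $f^M_{\mathcal T}$ and then prove the second equality by induction on $k-t$, with Lemma~\ref{lemma-5} supplying the inductive step. The first equality is immediate: the number of sets $\cS \subseteq [m+\ell]\setminus \cS_M$ of size $k$ containing $\mathcal{T}$ equals $\binom{m-t}{k-t}$, since after fixing $\mathcal T$ one must choose $k-t$ elements from the $m-t$ indices in $([m+\ell]\setminus \cS_M)\setminus \mathcal T$; hence the uniform expectation in \eqref{eq:deff} is exactly $\binom{m-t}{k-t}^{-1}\sum_{\cS}p^M_{\cS}(x)$.

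For the second equality I will induct downward on $t$ (equivalently upward on $k-t$), proving that
$$\sum_{\cS \supseteq \mathcal T,\,|\cS|=k}p^M_{\cS}(x)=\frac{1}{(k-t)!}(x-1)^{-(m-n-k)}\partial_x^{k-t}(x-1)^{m-n-t}p^M_{\mathcal T}(x),$$
which after dividing by $\binom{m-t}{k-t}$ gives the stated formula for $f^M_{\mathcal T}(x)$. The base case $t=k$ is trivial: the left side contains only $p^M_{\mathcal T}(x)$ and the right side reduces to $(x-1)^{-(m-n-k)}(x-1)^{m-n-k}p^M_{\mathcal T}(x)=p^M_{\mathcal T}(x)$. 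For the inductive step, I assume the claim for all subsets of size $t+1$, apply it to each $\mathcal T'=\mathcal T\cup\{i\}$ with $i\notin \mathcal T\cup \cS_M$, and sum over such $i$. On the right side, the factors $(x-1)^{m-n-t-1}$ and $(x-1)^{-(m-n-t-1)}$ cancel after substituting the Lemma~\ref{lemma-5} identity $\sum_{i}p^M_{\mathcal T\cup\{i\}}(x)=(x-1)^{-(m-n-t-1)}\partial_x(x-1)^{m-n-t}p^M_{\mathcal T}(x)$, producing precisely $\partial_x^{k-t-1}\partial_x=\partial_x^{k-t}$ applied to $(x-1)^{m-n-t}p^M_{\mathcal T}(x)$. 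On the left side, each $\cS \supseteq \mathcal T$ of size $k$ is counted once for every $i \in \cS\setminus \mathcal T$, hence exactly $k-t$ times; this combinatorial factor absorbs cleanly into $\frac{1}{(k-t-1)!}\to \frac{1}{(k-t)!}$, closing the induction.

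The specialization to $\mathcal T=\emptyset$ is then immediate: $p^M_{\emptyset}(x)=\det[xI-MM^T]$, and since $\rank(M)=r$ the matrix $MM^T$ has eigenvalues $\sigma_1(M)^2,\ldots,\sigma_r(M)^2$ together with $n-r$ zeros, so $p^M_{\emptyset}(x)=x^{n-r}\prod_{i=1}^r(x-\sigma_i(M)^2)$; substituting into the general formula with $t=0$ yields the stated expression for $f^M_{\emptyset}(x)$.

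The main obstacle is purely bookkeeping: tracking the cancellation between the factor $(x-1)^{m-n-t-1}$ that appears inside the inductive hypothesis and the factor $(x-1)^{-(m-n-t-1)}$ produced by Lemma~\ref{lemma-5}, and simultaneously accounting for the double-counting factor $k-t$ on the left so that the factorials $(k-t-1)!$ and $(k-t)!$ line up correctly. Once these are handled, the algebraic identity falls out automatically.
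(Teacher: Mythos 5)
Your proof is correct and follows essentially the same route as the paper: both arguments reduce the first equality to counting the $\binom{m-t}{k-t}$ admissible sets and then prove the identity $\sum_{\cS\supseteq\mathcal{T},|\cS|=k}p^M_{\cS}=\frac{1}{(k-t)!}(x-1)^{-(m-n-k)}\partial_x^{k-t}(x-1)^{m-n-t}p^M_{\mathcal{T}}$ by induction, using Lemma \ref{lemma-5} for the single-step recursion and the $(k-t)$-fold overcounting to fix the factorial. The only (immaterial) difference is that you induct downward on $t$, applying Lemma \ref{lemma-5} to $\mathcal{T}$ itself, whereas the paper inducts upward on $k$ and applies Lemma \ref{lemma-5} to the size-$(k-1)$ sets; your bookkeeping of the $(x-1)$ cancellation and of $p^M_{\emptyset}(x)=x^{n-r}\prod_{i=1}^{r}(x-\sigma_i(M)^2)$ is accurate.
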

\begin{proof}
To prove this lemma, it is sufficient to show that
\begin{equation}\label{eq:induck}
\sum\limits_{\cS\supseteq\mathcal{T},|\cS|=k \atop \cS\subseteq[m+\ell]\setminus\cS_{M}}p^M_{\cS}(x)=\frac{1}{(k-t)!}(x-1)^{-(m-n-k)}\partial_x^{k-t}(x-1)^{m-n-t}p^M_{\mathcal{T}}(x).
\end{equation}
 We prove (\ref{eq:induck}) by induction on $k$.
For $k=t$, a simple calculation shows that (\ref{eq:induck}) holds. For $k>t$, we
have
$$
\sum\limits_{\cS\supseteq\mathcal{T},|\cS|=k \atop \cS\subseteq[m+\ell]\setminus\cS_{M}}p^M_{\cS}(x)=\frac{1}{k-t}\sum\limits_{\cS\supseteq\mathcal{T},|\cS|=k-1 \atop \cS\subseteq[m+\ell]\setminus\cS_{M}}\sum\limits_{i\notin \cS\cup \cS_{M}} p^M_{\cS\cup\{i\}}(x).
$$
Hence by induction and  Lemma \ref{lemma-5}, we have
$$
\begin{array}{ll}
\sum\limits_{\cS\supseteq\mathcal{T},|\cS|=k \atop \cS\subseteq[m+\ell]\setminus\cS_{M}}p^M_{\cS}(x)
&=\frac{1}{k-t} \sum\limits_{\cS\supseteq\mathcal{T},|\cS|=k-1 \atop \cS\subseteq[m+\ell]\setminus\cS_{M}} (x-1)^{-(m-n-(k-1)-1)}\partial_x(x-1)^{m-n-(k-1)}p^M_{\mathcal{S}}(x)
\\[8mm]
&=\frac{1}{k-t}(x-1)^{-(m-n-(k-1)-1)}\partial_x \bigg((x-1)^{m-n-(k-1)} \sum\limits_{\cS\supseteq\mathcal{T},|\cS|=k-1 \atop \cS\subseteq[m+\ell]\setminus\cS_{M}}p^M_{\mathcal{S}}(x)\bigg)
\\[8mm]
&=\frac{1}{k-t}(x-1)^{-(m-n-(k-1)-1)}\partial_x\big((x-1)^{m-n-(k-1)}
\\
&\ \ \ \times \frac{1}{(k-1-t)!}(x-1)^{-(m-n-(k-1))}\partial_x^{k-1-t}(x-1)^{m-n-t}p^M_{\mathcal{T}}(x)\big)
\\[4.5mm]
&=\frac{1}{(k-t)!}(x-1)^{-(m-n-k)}\partial_x^{k-t}(x-1)^{m-n-t}p^M_{\mathcal{T}}(x).
\end{array}
$$
This completes the proof of this lemma.
\end{proof}

When $M=0$, Marcus, Spielman£¬ and Srivastava \cite[Theorem $5.4$]{spielman17}
proved that the polynomials $p_{\cS}(x):=p^{0}_{\cS}(x)$ for $|\cS|=k$ form an
interlacing family. Inspired by the arguments of Marcus-Spielman-Srivastava in
\cite{spielman17}, we prove that the polynomials $p^M_{\mathcal{\cS}}(x)$ for
$|\cS|=k$ still satisfy the requirements of interlacing families.

\begin{theorem}
\label{interlacing-families} The polynomials
$p^M_{\cS}(x)=\det\big[xI-Y_{\cS}Y^T_{\cS}-MM^T\big]$  for $|\cS|=k$ are an
interlacing family.
\end{theorem}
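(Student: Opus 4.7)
The plan is to construct an explicit interlacing family tree $\mathbb{T}$ whose leaves are labeled by $\{p^M_\cS(x)\}_{|\cS|=k}$ and verify the two conditions of Definition~\ref{def-inter} using Lemmas~\ref{lemma-interlacing}, \ref{interlacing-convex}, and~\ref{f-expression}. Set $N:=[m+\ell]\setminus\cS_M$, so $|N|=m$. I take $\mathbb{T}$ to have nodes at depth $t\in\{0,1,\ldots,k\}$ indexed by subsets $\mathcal{T}\subseteq N$ of size $t$, declaring $\mathcal{T}\cup\{i\}$ to be a child of $\mathcal{T}$ for each $i\in N\setminus\mathcal{T}$. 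Each non-leaf node $\mathcal{T}$ is labeled by $f^M_\mathcal{T}$ from \eqref{eq:deff} and each depth-$k$ leaf $\cS$ by $p^M_\cS$; the two conventions agree when $|\cS|=k$ since the expectation in \eqref{eq:deff} is then over the singleton $\{\cS\}$.

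Condition~(a) follows by a direct double-count: each $k$-subset $\cS\supseteq\mathcal{T}$ contains exactly $k-t$ elements of $N\setminus\mathcal{T}$, so after cancellation of the relevant binomial coefficients one finds
\[
\frac{1}{m-t}\sum_{i\in N\setminus\mathcal{T}}f^M_{\mathcal{T}\cup\{i\}}(x)=f^M_\mathcal{T}(x),
\]
exhibiting every internal node as the uniform convex combination of its children.

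For condition~(b), by the footnote to Definition~\ref{def-inter} it suffices to show that for each $\mathcal{T}$ with $|\mathcal{T}|=t<k$ and all convex weights $\{\mu_i\}_{i\in N\setminus\mathcal{T}}$, the polynomial $g(x):=\sum_{i\in N\setminus\mathcal{T}}\mu_i\,f^M_{\mathcal{T}\cup\{i\}}(x)$ is real-rooted. Applying Lemma~\ref{f-expression} to each child rewrites this as
\[
g(x)=\frac{(m-k)!}{(m-t-1)!}\,(x-1)^{-(m-n-k)}\,\partial_x^{k-t-1}\!\Big((x-1)^{m-n-t-1}\,q(x)\Big),
\]
where $q(x):=\sum_i\mu_i\,p^M_{\mathcal{T}\cup\{i\}}(x)=\sum_i\mu_i\det\!\big[xI-Q-y_iy_i^T\big]$ with $Q:=Y_\mathcal{T} Y_\mathcal{T}^T+MM^T$ symmetric. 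Lemma~\ref{lemma-interlacing} then produces a common interlacer for $\{p^M_{\mathcal{T}\cup\{i\}}\}_i$, and Lemma~\ref{interlacing-convex} hands us the real-rootedness of $q(x)$.

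The substantive remaining step, and the main obstacle, is to check that the operator $D:q\mapsto(x-1)^{-(m-n-k)}\partial_x^{k-t-1}\!\big((x-1)^{m-n-t-1}q\big)$ preserves real-rootedness. In the regime $k\le m-n$, where all exponents are nonnegative, I would argue in three stages: multiplication by $(x-1)^{m-n-t-1}$ keeps the polynomial real-rooted; iterated differentiation preserves real-rootedness by Rolle's theorem; and a Leibniz expansion shows that $x=1$ appears in the intermediate polynomial with multiplicity at least $m-n-k$, so the final division by $(x-1)^{m-n-k}$ merely strips that many repeated real roots. In the harder regime $k>m-n$, where $(x-1)^{m-n-t-1}$ and $(x-1)^{-(m-n-k)}$ cannot be read as literal polynomial factors, I would work directly with the Leibniz-expanded polynomial identity
\[
Dq(x)=\sum_{j=0}^{k-t-1}\binom{k-t-1}{j}\frac{(m-n-t-1)!}{(m-n-k+j)!}\,(x-1)^{j}\,\partial_x^{j}q(x),
\]
(read with the falling-factorial convention when the factorial arguments are negative) and obtain real-rootedness via a continuity argument: perturb the coefficients of $q$ so that its roots are simple and avoid $x=1$, apply the three-stage argument to the perturbed polynomial, and then pass to the limit, using that limits of real-rooted polynomials of bounded degree with real coefficients are real-rooted. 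Once $D$ is shown to preserve real-rootedness, $g$ is real-rooted for every convex combination, condition~(b) is verified, and $\{p^M_\cS\}_{|\cS|=k}$ form the required interlacing family.
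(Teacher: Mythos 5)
Your tree, your verification of condition (a), and your reduction of condition (b) to the real-rootedness of $D h_\mu$ with $h_\mu=\sum_i\mu_i p^M_{\mathcal{T}\cup\{i\}}$ (via Lemmas \ref{f-expression}, \ref{lemma-interlacing} and \ref{interlacing-convex}) all coincide with the paper's argument, and your first regime ($k\le m-n$, all exponents nonnegative) is handled correctly. The gap is in your second regime. First, you misidentify where the difficulty lies: when $k>m-n$ the factor $(x-1)^{-(m-n-k)}=(x-1)^{k-(m-n)}$ is a genuine polynomial multiplier of positive degree, which trivially preserves real-rootedness; as long as $t\le m-n-1$ nothing goes wrong. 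The delicate case is $t\ge m-n$, where $(x-1)^{m-n-t-1}$ is a negative power. There your perturbation-and-continuity argument fails: perturbing $h_\mu$ so that its roots avoid $x=1$ makes $(x-1)^{m-n-t-1}h_\mu$ a non-polynomial rational function, so the three-stage argument cannot be applied to the perturbed polynomial, and what the continuity argument would actually need is the false statement that the Leibniz-expanded operator preserves real-rootedness of \emph{arbitrary} real-rooted degree-$n$ polynomials. That statement is false already for a single factor of the composition $D=\frac{1}{(k-t-1)!}\prod_{j=t+1}^{k-1}\bigl[(m-n-j)+(x-1)\partial_x\bigr]$: the operator $q\mapsto -q+(x-1)q'$ (constant $c=-1$, which occurs whenever $j=m-n+1\le k-1$) sends the real-rooted polynomial $(x-2)^2(x+1)$ to $2(x-2)(x^2-x+1)$, which has non-real roots.

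The correct completion must use the special structure of $h_\mu$ rather than a generic density argument. Since $Y_{\mathcal{T}\cup\{i\}}Y^T_{\mathcal{T}\cup\{i\}}+MM^T=I-\sum_{j\notin\mathcal{T}\cup\{i\}\cup\cS_M}y_jy_j^T$ and the subtracted sum has rank at most $m-t-1$, the eigenvalue $1$ has multiplicity at least $n-(m-t-1)=t+1-(m-n)$ in each $p^M_{\mathcal{T}\cup\{i\}}$, hence $(x-1)^{\,t+1-(m-n)}$ divides $h_\mu$. With this divisibility in hand, $D$ is literally ``divide by $(x-1)^{t+1-(m-n)}$, differentiate $k-t-1$ times, multiply by $(x-1)^{k-(m-n)}$,'' each of which preserves real-rootedness (and the degrees check out: $n\mapsto m-t-1\mapsto m-k\mapsto n$). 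This divisibility observation is the missing ingredient; without it your argument does not close, and with it the perturbation is unnecessary.
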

\begin{proof}

We construct a tree  $\mathbb{T}$ whose  nodes  are
$\mathcal{T}\subseteq[m+\ell]\setminus \cS_{M}$ (possibly empty) with size less than
or equal to  $k$. The node $\mathcal{T}_1$ is a child  of $\mathcal{T}_2$ if and only
if $\mathcal{T}_2\subset \mathcal{T}_1$ and $\abs{\mathcal{T}_1\setminus
\mathcal{T}_2}=1$. For an internal node $\mathcal{T}\subseteq[m+\ell]\setminus
\cS_{M}$ (possibly empty) with size less than $k$, we label  $\mathcal{T}$ by the
polynomial $f^M_{\mathcal{T}}(x)$, which is defined in (\ref{eq:deff}). Similarly, we
label  the leaves of $\mathbb{T}$ as $p^M_{\cS}(x)$ with $\abs{\cS}=k$. Note that the
polynomials $p^M_{\cS}(x)$ are real-rooted and monic, which implies that the
polynomials $f^M_{\mathcal{T}}(x)$ are also monic as they are the averages of
$p^{M}_{\mathcal{S}}(x)$ with $\cS\subseteq[m+\ell]\setminus\cS_{M}$ with size $k$
containing $\mathcal{T}$. We will show that the polynomials $f^M_{\mathcal{T}}(x)$
are  real-rooted later. Now we
 already constructed the finite rooted tree $\mathbb{T}$, where $f^M_{\emptyset}(x)$
is the root of the  tree.

We already showed that the tree $\mathbb{T}$ satisfies  $(a)$ in  Definition
\ref{def-inter}. We next show that $\mathbb{T}$ also satisfies condition $(b)$.

Suppose that $\mathcal{T}\subseteq[m+\ell]\setminus \cS_{M}$ with size less than $k$. 
To complete the proof, by Lemma \ref{lemma-interlacing},  we need to prove that  all
convex combinations of $ f^M_{\mathcal{T}\cup\{i\}} $ and $
f^M_{\mathcal{T}\cup\{j\}} $ are real-rooted for every $i,j\notin
\mathcal{T}\cup\cS_M$.
 That is, we must prove that the polynomial
$$
q_{\mu}(x):=\mu f^M_{\mathcal{T}\cup\{i\}}(x)+(1-\mu) f^M_{\mathcal{T}\cup\{j\}}(x)
$$
is real-rooted  for each $0\leq \mu\leq 1$. Let
$$
h_{\mu}(x):=\mu p^M_{\mathcal{T}\cup\{i\}}(x)+(1-\mu)p^M_{\mathcal{T}\cup\{j\}}(x).
$$
It follows from Lemma \ref{lemma-interlacing} that  the polynomials
$p^M_{\mathcal{T}\cup\{i\}}$ and $p^M_{\mathcal{T}\cup\{j\}}$ have a common
interlacing. Hence, by Lemma \ref{interlacing-convex}, we have that $h_{\mu}(x)$ is
real-rooted. According to  Lemma \ref{f-expression}, we obtain  that
$$
q_{\mu}(x)=\frac{(m-k)!}{(m-t)!}(x-1)^{-(m-n-k)}\partial_x^{k-t}(x-1)^{m-n-t}h_{\mu}(x).
$$
Noting the real rootedness can be  preserved  by multiplication by $(x-1)$, taking
derivatives, and dividing by $(x-1)$ when $1$ is a root,  we can obtain that $
q_{\mu}(x) $ is real-rooted.
\end{proof}

\subsection{Proof of Theorem \ref{fix-theorem-r1}}

The aim of  this subsection is to prove Theorem \ref{fix-theorem-r1}. We first
establish a lower bound on the smallest zero of $f^M_{\emptyset}(x)$  using the lower
barrier function.

\begin{lemma}
\label{lemma-fix-r1}
Suppose that $M\in {\mathbb R}^{n\times \ell}$ with nonzero singular values $\sigma_1,\ldots,\sigma_r\in(0,1]$.
For $n-r\leq k\leq m-1$, let
$$
f^M_{\emptyset}(x)=\frac{(m-k)!}{m!}(x-1)^{-(m-n-k)}\partial_x^{k}(x-1)^{m-n}x^{n-r}\prod
\limits_{i=1}^r\big(x-\sigma_i^2\big).
$$
Then
$$
\lambda_{\min}(f^M_{\emptyset})\geq \Gamma^{-1}(m,n,k,r)\cdot \frac{m-n+r}{m-n+\|M^{\dagger}\|^2_F},
$$
where $\lambda_{\min}(f^M_{\emptyset})$ denotes the smallest zero of
$f^M_{\emptyset}$  and $\Gamma(m,n,k,r)$ is defined in \eqref{gamma}.
\end{lemma}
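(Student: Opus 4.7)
\textbf{Proof plan for Lemma \ref{lemma-fix-r1}.}

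The strategy is to apply the lower barrier function argument of Lemma \ref{low-descent} iteratively to the polynomial
\[
p(x) := (x-1)^{m-n} x^{n-r} \prod_{i=1}^r \bigl(x - \sigma_i^2\bigr),
\]
which has degree $m$ with roots $1$ (multiplicity $m-n$), $0$ (multiplicity $n-r$), and $\sigma_1^2, \ldots, \sigma_r^2 \in (0,1]$. By the explicit formula for $f_\emptyset^M$ in Lemma \ref{f-expression}, $f_\emptyset^M(x)$ is a positive constant multiple of $(x-1)^{-(m-n-k)} \partial_x^k p(x)$. Since $\partial_x^k p$ has the factor $(x-1)^{m-n-k}$ from the $(x-1)^{m-n}$ in $p$, the division only removes roots at $1$, so the smallest root of $f_\emptyset^M$ (which lies below $1$) coincides with $\lambda_{\min}(\partial_x^k p)$. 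Thus the task reduces to a lower bound on $\lambda_{\min}(\partial_x^k p)$.

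For $b < 0$, the barrier function is
\[
\Phi_p(b) \;=\; \frac{m-n}{1-b} + \frac{n-r}{-b} + \sum_{i=1}^r \frac{1}{\sigma_i^2 - b}.
\]
Since $\sigma_i^2 \in (0,1]$ and $b<0$, the elementary inequality $(1-\sigma_i^2)(-b) \geq 0$ gives $\sigma_i^2 - b \geq \sigma_i^2(1-b)$, hence
\[
\sum_{i=1}^r \frac{1}{\sigma_i^2-b} \;\leq\; \frac{1}{1-b}\sum_{i=1}^r \frac{1}{\sigma_i^2} \;=\; \frac{\|M^\dagger\|_F^2}{1-b}.
\]
Consequently
\[
\Phi_p(b) \;\leq\; \frac{m-n + \|M^\dagger\|_F^2}{1-b} + \frac{n-r}{-b}.
\]
This bundles all of the information about $\sigma_i$ into the single scalar $\|M^\dagger\|_F^2$.

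Now I would pick a pair $(b_0, \delta)$ with $b_0 < 0$ and $\Phi_p(b_0) \leq 1/\delta$. Iterating Lemma \ref{low-descent} $k$ times gives $\Phi_{\partial_x^j p}(b_0 + j\delta) \leq 1/\delta$ for each $j = 0,1,\ldots,k$, and in particular $\lambda_{\min}(\partial_x^k p) > b_0 + k\delta$. By the reduction above, this becomes a lower bound on $\lambda_{\min}(f_\emptyset^M)$. It then remains to maximize $b_0 + k\delta$ subject to the constraint. With an ansatz of the form $b_0 = -\alpha$ and $\delta$ determined by saturating the constraint, Lagrange multipliers (or an equivalent parametrization $q = (\text{fraction of the budget allotted to }\frac{n-r}{\alpha})$) reduce the problem to a one-variable optimization whose optimum can be expressed in closed form via a Cauchy-Schwarz-type identity.

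The main obstacle is recovering the sharp $\Gamma^{-1}(m,n,k,r)$ factor, whose numerator contains $\sqrt{(k+1)(m-n+r)} - \sqrt{(n-r)(m-k-1)}$ rather than the naive $\sqrt{k(m-n+r)} - \sqrt{(n-r)(m-k)}$ that falls out of a pure ``start with $p$, descend $k$ times'' analysis. To get the extra ``$+1$'' improvement on $k$ and the ``$-1$'' improvement on $m-k$, I would re-do the descent with one additional step by working with $(x-1) \cdot p(x)$ and $k+1$ derivatives (so that the degree count absorbs the extra root at $1$ created by this multiplication), and then convert back by dividing out $(x-1)^{m-n-k}$; equivalently, one may invoke the fact that the optimum of $b_0 + k\delta$ under the refined barrier that accounts for the $(x-1)^{m-n-k}$ factor to be removed at the end yields exactly the $\Gamma^{-1}$ expression. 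Combining this with the extra scalar factor $\frac{m-n+r}{m-n+\|M^\dagger\|_F^2}$, which arises from how the bound $\sigma_i^2 - b \geq \sigma_i^2(1-b)$ effectively re-weights the $r$ roots at the $\sigma_i^2$ into $r$ extra roots at $1$ rescaled by $\|M^\dagger\|_F^2/r$, yields the claimed inequality.
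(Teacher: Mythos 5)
Your overall architecture (reduce to $\lambda_{\min}(\partial_x^k p)$, bound the barrier $\Phi_p$ at a point $b<0$, descend $k$ times with Lemma \ref{low-descent}, then optimize over $(b,\delta)$) is the paper's, but the step where you replace $\sum_i (\sigma_i^2-b)^{-1}$ by $\|M^\dagger\|_F^2/(1-b)$ is a genuine gap: it is too lossy to recover the stated constant. The paper instead applies Jensen's inequality to the concave function $x\mapsto 1/(-b+1/x)$, which merges the $m-n$ roots at $1$ and the $r$ roots at the $\sigma_i^2$ into $m-n+r$ effective roots at $\Lambda:=\frac{m-n+r}{m-n+\|M^\dagger\|_F^2}$, so the surrogate root configuration keeps total multiplicity $m$. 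Your inequality instead produces $m-n+\|M^\dagger\|_F^2$ effective roots at $1$, inflating the total multiplicity to $m+\|M^\dagger\|_F^2-r\geq m$; since your upper bound on $\Phi_p(b)$ dominates the paper's for every $b<0$ (this is equivalent to $r\leq\|M^\dagger\|_F^2$, which always holds because $\sigma_i\leq 1$), the optimal value of $b+(k+1)\delta$ under your constraint is at most the paper's, and strictly smaller in general. Concretely, for $m=4$, $n=2$, $r=1$, $k=2$, $\sigma_1^2=1/2$ the target is $\Gamma^{-1}(m,n,k,r)\cdot\Lambda=3/16=0.1875$, while the best your constraint allows is $(2\sqrt3-\sqrt2)^2/25\approx0.168$. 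So the claimed bound cannot be reached along your route.

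A second, smaller issue: you do not need the detour through $(x-1)p(x)$ and $k+1$ derivatives to obtain the ``$+1$''. After $k$ applications of Lemma \ref{low-descent} you have $\Phi_{\partial_x^k p}(b+k\delta)\leq 1/\delta$, and since $\Phi_{\partial_x^k p}(y)\geq \frac{1}{\lambda_{\min}(\partial_x^k p)-y}$ for $y$ below the smallest root, this already yields $\lambda_{\min}(\partial_x^k p)\geq b+(k+1)\delta$, a full extra $\delta$ beyond the tracked location $b+k\delta$. Optimizing $b+(k+1)\delta$ (not $b+k\delta$) against the Jensen-improved barrier bound is exactly what produces $\sqrt{(k+1)(m-n+r)}-\sqrt{(n-r)(m-k-1)}$. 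Your proposed fix is moreover not obviously sound, since $\partial_x^{k+1}\big[(x-1)p\big]=(x-1)\partial_x^{k+1}p+(k+1)\partial_x^{k}p$ is not a constant multiple of $\partial_x^{k}p$, so its roots are not directly those you need. Replace your bound on $\Phi_p$ by the Jensen step and use the free extra $\delta$; then the optimization you sketch does land on $\Gamma^{-1}(m,n,k,r)\cdot\Lambda$.
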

\begin{proof}
Let
$$
g(x):=\partial_x^{k}(x-1)^{m-n}x^{n-r}\prod\limits_{i=1}^r\big(x-\sigma_i^2\big)
$$
and
$$
p(x):=(x-1)^{m-n}x^{n-r}\prod\limits_{i=1}^r\big(x-\sigma_i^2\big).
$$
By Rolle's theorem, we know that $\partial_xp$ interlaces $p$.  Thus, applying this
fact $k$ times and noting that all the zeros of $p(x)$  belong to $[0,1]$, we
conclude that all the zeros of $g(x)$ are between $0$ and $1$, which implies that
$\lambda_{\min}(f^M_{\emptyset})=\lambda_{\min}(g)$. Thus, it is sufficient to prove
that
$$
\lambda_{\min}( g)\geq \Gamma^{-1}(m,n,k,r)\cdot \frac{m-n+r}{m-n+\sum\limits_{i=1}^r\frac{1}{\sigma_i^2}}.
$$
For convenience, we set
$$
\Lambda:=\Lambda(m,n,r,\sigma):=\frac{m-n+r}{m-n+\sum\limits_{i=1}^r\frac{1}{\sigma_i^2}}.
$$
 For any $\delta>0$, let
$$
b:=b(\delta):=\frac{(\Lambda-m\delta)-\sqrt{(\Lambda-m\delta)^2+4\delta \Lambda(n-r)}}{2}.
$$
Note that $b<\lambda_{\rm min}(p)$.
 We claim that
\begin{equation}
\label{barrier-proof-1}
\Phi_{g}(b+k\delta)\leq \frac{1}{\delta}.
\end{equation}
Indeed, according to the definition of   the lower barrier function of $p$, we obtain
that
\begin{equation}
\label{inequation-p}
\begin{array}{ll}
\Phi_{p}(b)=-\frac{p'}{p}&=\frac{m-n}{-b+1}+\sum\limits_{i=1}^r\frac{1}{-b+\sigma_i^2}+\frac{n-r}{-b}
\\[6mm]
&=(m-n+r)\bigg(\frac{1}{m-n+r}\cdot\frac{m-n}{-b+1}+\frac{1}{m-n+r}\sum\limits_{i=1}^r\frac{1}{-b+\sigma_i^2}\bigg)+\frac{n-r}{-b}
\\[6mm]
&\leq \frac{m-n+r}{-b+\frac{1}{\frac{1}{m-n+r}\big(m-n+\sum\limits_{i=1}^r\frac{1}{\sigma_i^2}\big)}}+\frac{n-r}{-b}=\frac{m-n+r}{-b+\Lambda}+\frac{n-r}{-b}
\\[8mm]
&=\frac{1}{\delta},
\end{array}
\end{equation}
where the inequality follows from Lemma \ref{lemma-Jensen} with   the fact  the
function $x\to \frac{1}{-b+\frac{1}{x}}$  is concave on $(0,+\infty)$. Applying Lemma
\ref{low-descent} $k$ times, we obtain
$$
\Phi_{g}(b+k\delta)\leq \Phi_{p}(b)\leq\frac{1}{\delta},
$$
which implies \eqref{barrier-proof-1}. Noting that
$$
\frac{1}{\lambda_{\min}(g)-(b+k\delta)}\leq \Phi_{g}(b+k\delta)\leq \frac{1}{\delta},
$$
we obtain that
\begin{equation}
\label{equation-min}
\lambda_{\min}(g)\geq b+(k+1)\delta,
\end{equation}
i.e.,
\begin{equation}\label{eq:xiao}
\lambda_{\min}(g)\geq\mu(\delta):=\frac{(\Lambda-m\delta)-\sqrt{(\Lambda-m\delta)^2+4\delta \Lambda(n-r)}}{2}+(k+1)\delta.
\end{equation}
Now we derive the value of $\delta$ at which $\mu(\delta)$ is maximized. Taking derivatives in $\delta$, we obtain that
$$
\mu'(\delta)=k+1-\frac{m}{2}-\frac{m^2\delta-m\Lambda+2\Lambda(n-r)}{2 \sqrt{(\Lambda-m\delta)^2+4\delta \Lambda(n-r)} }.
$$
As $n-r\leq k\leq m-1$, we know that
$$
\mu'(0)=k+1-(n-r)\geq0 \ \  \mbox{and }  \  \  \lim\limits_{\delta\to \infty}\mu'(\delta)=k+1-m\leq 0.
$$
By continuity, a maximum will occur at a point $\delta^*\geq 0$ at which $\mu'(\delta^*)=0$. The solution is given by
$$
\delta^*=\frac{m\Lambda-2\Lambda(n-r)}{m^2}-\frac{m-2(k+1)}{m^2}\sqrt{\frac{\Lambda^2(n-r)(m-n+r)}{(k+1)(m-k-1)}},
$$
which is positive for $m>k\geq n-r$.
Observing that
$$
(\Lambda-m\delta^*)^2+4\delta^*\Lambda(n-r)=\frac{\Lambda^2(n-r)(m-n+r)}{(k+1)(m-k-1)}
$$
and by calculation,
we can obtain that
\begin{equation}\label{eq:xiao1}
\mu(\delta^*)=\Gamma^{-1}(m,n,k,r) \cdot \Lambda.
\end{equation}
Combining (\ref{eq:xiao}) and (\ref{eq:xiao1}), we obtain the lemma.
\end{proof}

Now we are ready to prove Theorem \ref{fix-theorem-r1}.
\begin{proof}[Proof of Theorem \ref{fix-theorem-r1}]
According to Theorem \ref{interlacing-families}, the polynomials $p^M_{\cS}(x)$ with
$|\cS|=k$ form an interlacing family.  Lemma \ref{root-interlacing} implies that
there exists a subset $\cS_0\subseteq[m+\ell]\setminus \cS_M$ such that
$\abs{\cS_0}=k$ and
$$
\lambda_{\min}\big(f^M_{\cS_0}\big)\geq\lambda_{\min}\big(f^M_{\emptyset}\big)\geq \Gamma^{-1}(m,n,k)\cdot\frac{m-n+r}{m-n+\|M^{\dagger}\|^2_F}.
$$
Here, the second inequality follows from Lemma \ref{lemma-fix-r1}.
 As $f^M_{\cS_0}$ is the characteristic polynomial of
$MM^T+Y_{\cS_0}Y_{\cS_0}^T$, we conclude that there exists a
subset $\cS_0\subseteq [m+\ell]\setminus\cS_M$ with size $k$ for which
$$
\sigma_{\min}\big(\big[M \ Y_{\cS_0}\big]\big)^2=\lambda_{\min}\bigg(MM^T+Y_{\cS_0}Y_{\cS_0}^T\bigg)\geq \Gamma^{-1}(m,n,k)\cdot\frac{m-n+r}{m-n+\|M^{\dagger}\|^2_F}.
$$
\end{proof}


\section{A deterministic greedy selection algorithm}
\label{section-algorithm}

The aim of this section is to present a deterministic greedy selection  algorithm for
Problem \ref{GSSM}. The proposed algorithm is based on the proof of the main result.
Suppose that  $A\in\mathbb{R}^{n\times \ell}$ and $B\in\mathbb{R}^{n\times m}$ with
$\mbox{rank}([A \ \ B])=n$. Let the  SVD of $[A \ \, B]$ be $U\Sigma Y$,
 where $Y=[Y_{\cS_A} \ \, Y_{\cS_B}]$ and $\cS_A$ and $\cS_B$ denote the column set of matrices
$A$ and $B$, respectively. Assume that $M:=Y_{\cS_A}$ and $Y=[y_1,\ldots,y_{m+\ell}]$. Given a partial
assignment $\{s_1,\ldots,s_j\}\subseteq\cS_B$, from Lemma \ref{f-expression}, we set
the polynomial corresponding to $\{s_1,\ldots,s_j\}$ as
\begin{equation}
\label{alg-fm}
f^M_{s_1,\ldots,s_j}(x):=\frac{(m-k)!}{(m-j)!}(x-1)^{-(m-n-k)}\partial_x^{k-j}(x-1)^{m-n-j}p^M_{s_1,\ldots,s_j}(x),
\end{equation}
where
$p^M_{s_1,\ldots,s_j}(x)=\det\bigg[xI-\sum\limits_{i=1}^j{y_{s_i}y^T_{s_i}}-MM^T\bigg]$.
The algorithm produces the subset $\mathcal{S}_0$  in polynomial time by iteratively
adding columns to it. Namely, suppose that at the $(i-1)$-th ($1\leq i\leq k$)
iteration, we  have already found a partial assignment $s_1,\ldots,s_{i-1}$ (which is
empty when $i=1$). Now, at the $i$-th iteration, the algorithm finds an index
$s_{i}\in\cS_B\setminus \{s_1,\ldots,s_{i-1}\}$ such that
$\lambda_{\min}(f^M_{s_1,\ldots,s_{i}})\geq
\lambda_{\min}(f^M_{s_1,\ldots,s_{i-1}})$.

Let $p(x)$ be a given real-rooted polynomial; we use
$\lambda^\epsilon_{\min}\big(p(x)\big)$ to denote an $\epsilon$-approximation to the
smallest root of $p(x)$, i.e.,
$$
\big|\lambda^\epsilon_{\min}\big(p(x)\big)-\lambda_{\min}\big(p(x)\big)\big|\leq \epsilon.
$$
The deterministic greedy selection algorithm can be stated as follows:

\begin{algorithm}[H]
\caption{Deterministic greedy selection algorithm}
\begin{algorithmic}[H]
\Require
$B\in\mathbb{R}^{n\times m}$ of rank $n$; $A\in\mathbb{R}^{n\times \ell}$;  sampling parameter $k\in \{n-\mbox{rank}(A),\ldots, m-1\}$.
\begin{enumerate}
\item[1:] Set $s_0=\emptyset$ and $i:=1$.
\item[2:] Compute the thin SVD of $[A \ \, B]=U\Sigma Y$ with $Y=[Y_{\cS_A} \ \, Y_{\cS_B}]$.
\item[3:] Let  $M:=Y_{\cS_A} $ and $Y=[y_1,\ldots,y_{m+\ell}]\in\mathbb{R}^{n\times (m+\ell)}$.
\item[4:] Using the standard technique of binary search with a Sturm sequence, for each
$s \in\cS_B\setminus \{s_1,\ldots,s_{i-1}\}$,
compute  an $\epsilon$-approximation to the smallest root of $f^M_{s_1,\ldots,s_{i-1},s}(x)$.
\item[5:] Find
$$
s_i=
\argmax{s\in\cS_B\setminus\{s_1,\ldots,s_{i-1}\}}\lambda^\epsilon_{\min}\big(f^M_{s_1,\ldots,s_{i-1},s}(x)\big).
$$
\item[6:] If $i>k$, stop the algorithm. Otherwise, set $i=i+1$ and return to Step $4$.
\end{enumerate}

\Ensure
Subset $\cS_0=\{s_1,\ldots,s_k\}.$
\end{algorithmic}
\end{algorithm}

We have the following theorem for Algorithm $1$.

\begin{theorem}\label{th:alg}
Suppose that $0<\epsilon<\frac{1}{2k}$.
Algorithm $1$ can output a subset $\cS_0=\{s_1,\ldots,s_k\}$ such that
\begin{equation}
\label{algbound}
\big\|\big([A \ B_{\mathcal{S}_0}]\big)^{\dagger}\big\|_{\xi}^2 \leq \Gamma(m,n,k,r)\bigg(1+\frac{\|A^{\dagger}B\|^2_F}{m-n+r}\bigg)\left(1+2k\epsilon\right) \cdot\big\|([A \ \, B])^{\dagger}\big\|_{\xi}^2.
\end{equation}
The running time complexity is $O(k(m-\frac{k}{2})n^{\theta}\log (1/\epsilon))$,
where $\theta\in (2,2.373)$ is the matrix multiplication complexity exponent.
\end{theorem}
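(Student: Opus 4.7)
The plan is to combine the interlacing family of Section \ref{section-proof} with a direct accounting for the $\epsilon$-approximate root computations at each greedy step, then to reuse the SVD reduction from the proof of Theorem \ref{fix-selection} to translate the resulting guarantee on $\sigma_{\min}^2([M\ Y_{\cS_0}])$ into the stated bound on $\|[A\ B_{\cS_0}]^\dagger\|_\xi^2$.

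First I set $L_i:=\lambda_{\min}(f^M_{s_1,\ldots,s_i})$ and track how $L_i$ evolves with $i$. Theorem \ref{interlacing-families} ensures that $\{f^M_{s_1,\ldots,s_{i-1},s}\}_{s}$ are the children of $f^M_{s_1,\ldots,s_{i-1}}$ in an interlacing family, so Lemma \ref{root-interlacing} yields an index $s^*$ with $\lambda_{\min}(f^M_{s_1,\ldots,s_{i-1},s^*})\geq L_{i-1}$. Because Step 5 picks $s_i$ to maximize $\lambda^\epsilon_{\min}$, the definition of $\lambda^\epsilon_{\min}$ gives
\begin{equation*}
L_i\;\geq\;\lambda^\epsilon_{\min}(f^M_{s_1,\ldots,s_i})-\epsilon\;\geq\;\lambda^\epsilon_{\min}(f^M_{s_1,\ldots,s_{i-1},s^*})-\epsilon\;\geq\;\lambda_{\min}(f^M_{s_1,\ldots,s_{i-1},s^*})-2\epsilon\;\geq\;L_{i-1}-2\epsilon,
\end{equation*}
so by induction $L_k\geq L_0-2k\epsilon$. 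Lemma \ref{lemma-fix-r1} combined with the estimate $\|M^\dagger\|_F^2\leq r+\|A^\dagger B\|_F^2$ already derived in (\ref{proof-main-111}) lower bounds $L_0$ by $\Gamma^{-1}(m,n,k,r)\bigl(1+\tfrac{\|A^\dagger B\|_F^2}{m-n+r}\bigr)^{-1}$. Taking $j=k$ in (\ref{alg-fm}) collapses $f^M_{s_1,\ldots,s_k}$ to $p^M_{s_1,\ldots,s_k}$, the characteristic polynomial of $[M\ Y_{\cS_0}][M\ Y_{\cS_0}]^\T$, so $\sigma_{\min}^2([M\ Y_{\cS_0}])=L_k$. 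Replaying the SVD manipulation (\ref{main-proof-1})--(\ref{proof-main-111}) with this approximate lower bound in place of the exact $\sigma_{\min}^2$ then produces (\ref{algbound}), once the additive $2k\epsilon$ loss is absorbed into the multiplicative factor $(1+2k\epsilon)$ using the hypothesis $\epsilon<1/(2k)$.

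For the complexity, Step 2 computes a thin SVD of $[A\ B]\in\R^{n\times(m+\ell)}$ in $O(mn^{\theta-1})$ time, which is absorbed into the main cost. At iteration $i$ there are $m-i+1$ candidate indices; for each candidate $y=y_s$ one assembles the characteristic polynomial of the $n\times n$ symmetric matrix $MM^\T+\sum_{j<i}y_{s_j}y_{s_j}^\T+yy^\T$ in $O(n^\theta)$ time via a fast characteristic-polynomial routine, and then locates its smallest root to accuracy $\epsilon$ by binary search on a Sturm sequence in an additional $O(n^\theta\log(1/\epsilon))$ time. Summing $\sum_{i=1}^k(m-i+1)=k(m-\tfrac{k-1}{2})$ yields the claimed $O(k(m-\tfrac{k}{2})n^\theta\log(1/\epsilon))$ running time.

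The main subtle step is the passage from the additive $2k\epsilon$ loss on $\sigma_{\min}^2$ to the multiplicative $(1+2k\epsilon)$ factor in (\ref{algbound}); a direct substitution gives $1/(L_0-2k\epsilon)$ rather than $(1+2k\epsilon)/L_0$, and closing the gap requires the restriction $\epsilon<1/(2k)$ together with a short algebraic argument that exploits the specific form of $\Gamma(m,n,k,r)$ (and the fact, visible from the proof of Lemma \ref{lemma-fix-r1}, that the zeros of $f^M_{\emptyset}$ lie in $[0,1]$). Everything else is a direct assembly of the machinery developed in Section \ref{section-proof}.
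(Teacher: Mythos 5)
Your overall route coincides with the paper's: telescope the loss in $\lambda_{\min}$ across the $k$ greedy steps using Theorem \ref{interlacing-families} and Lemma \ref{root-interlacing}, bound $\lambda_{\min}(f^M_{\emptyset})$ from below via Lemma \ref{lemma-fix-r1} together with \eqref{proof-main-111}, note that $f^M_{s_1,\ldots,s_k}=p^M_{s_1,\ldots,s_k}$, and rerun the SVD reduction \eqref{main-proof-1}--\eqref{proof-main-111}; the complexity accounting is also the same. Your per-step bookkeeping is in fact more careful than the paper's: the paper's chain charges only $\epsilon$ per iteration, whereas with the two-sided definition of $\lambda^{\epsilon}_{\min}$ the correct charge is $2\epsilon$ per iteration, exactly as you derive (one $\epsilon$ for underestimating the root of the best candidate $s^*$, one for overestimating the root of the selected $s_i$).

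The genuine gap is the step you single out and then defer to ``a short algebraic argument'': passing from the additive guarantee $L_k\geq L_0-2k\epsilon$ to the multiplicative factor $(1+2k\epsilon)$ in \eqref{algbound}. The only lower bound available for $L_0$ is $\beta:=\Gamma^{-1}(m,n,k,r)\,\frac{m-n+r}{m-n+\|M^{\dagger}\|_F^2}$, and since $\frac{1}{\beta}\leq\Gamma(m,n,k,r)\bigl(1+\frac{\|A^{\dagger}B\|_F^2}{m-n+r}\bigr)$ can hold with near equality, what must be shown in the worst case is $\frac{1}{\beta-2k\epsilon}\leq\frac{1+2k\epsilon}{\beta}$, which is \emph{equivalent} to $\beta\geq 1+2k\epsilon$. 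This is impossible: the zeros of $f^M_{\emptyset}$ lie in $[0,1]$, so $\beta\leq L_0\leq 1$; worse, $\Gamma(m,n,k,r)\geq m/(k+1)$ gives $\beta\leq (k+1)/m$, so for $k\ll m$ the quantity $\beta-2k\epsilon$ need not even be positive under the sole hypothesis $\epsilon<\frac{1}{2k}$ (so full-rankness of $[A\ B_{\cS_0}]$ is not guaranteed either). Thus the fact that the roots lie in $[0,1]$ works against you, not for you, and the specific form of $\Gamma$ does not close the gap. To be fair, the paper's own proof is equally silent here --- it telescopes a loss of $k\epsilon$ and then says the bound follows ``using a similar argument for Theorem \ref{fix-selection}'' --- so you have correctly located the weak point of the theorem; but your assertion that the gap closes is unsubstantiated and, as written, false. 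A repair is to demand \emph{relative} accuracy $|\lambda^{\epsilon}_{\min}(p)-\lambda_{\min}(p)|\leq\epsilon\,\lambda_{\min}(p)$ (equivalently, absolute accuracy of order $\epsilon\cdot\Gamma^{-1}(m,n,k,r)$), after which your telescoping gives $L_k\geq L_0(1-\epsilon)^k(1+\epsilon)^{-k}$ and the multiplicative form of \eqref{algbound} follows for $\epsilon$ below a constant times $1/k$, with only a logarithmic change in the stated running time.
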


\begin{proof}
By Step $4$ in Algorithm $1$, we obtain that
$$
\lambda_{\min}\big(f^M_{s_1,\ldots,s_{k}}(x)\big)\geq
\lambda_{\min}\big(f^M_{s_1,\ldots,s_{k-1}}(x)\big)-\epsilon\geq
\cdots\geq \lambda_{\min}\big(f^M_{s_1}(x)\big)-(k-1)\epsilon\geq  \lambda_{\min}\big(f^M_{\emptyset}(x)\big)-k\epsilon.
$$
Then, using a similar argument for Theorem  \ref{fix-selection}, we can obtain the bound
(\ref{algbound}).
 We next establish the  running time complexity.

The main cost of Algorithm $1$ is  Steps $2$ and  $4$. In Step $2$, the time
complexity for  the computation of the SVD of  $[A \ \, B]$ is
$O\big((m+\ell)n^2\big)$. For Step $4$, at the $i$-th iteration, we claim the time
complexity for computing
$\lambda^\epsilon_{\min}\big(f^M_{s_1,\ldots,s_{i-1},s}(x)\big)$ over all
$s\in\cS_B\setminus \{s_1,\ldots,s_{i-1}\}$ is $O((m-i+1)n^{\theta}\log (
1/\epsilon))$. Therefore,  Algorithm $1$ can produce the subset $\mathcal{S}_0$  in
$O(k(m-\frac{k}{2})n^{\theta}\log (1/\epsilon))$ time.

Indeed, at the $i$-th iteration, the main cost of Step $4$ consists of
(\romannumeral1) the computations of $f^M_{s_1,\ldots,s_{i}}(x)$  and
(\romannumeral2) the computations of an $\epsilon$-approximation to the  smallest
root of $f^M_{s_1,\ldots,s_{i}}(x)$ for every $s_{i}\in\cS_B\setminus
\{s_1,\ldots,s_{i-1}\}$. First, for any $\{s_1,\ldots,s_j\}\subseteq\cS_B$, we can
compute the characteristic polynomial $p^M_{s_1,\ldots,s_j}(x)$  in $O(n^{\theta}\log
n)$  time, where $2<\theta<2.373$ is an admissible exponent of matrix multiplication
\cite{Gall2014,Keller1985}. From \eqref{alg-fm},  we know that the  time complexity
for the computation of $f^M_{s_1,\ldots,s_j}(x)$ is $O(n^{\theta}\log n)$ as its main
cost is to compute  $p^M_{s_1,\ldots,s_j}(x)$. Therefore, the running time for
computing $f^M_{s_1,\ldots,s_{i}}(x)$ over all
$s_{i}\in\cS_B\setminus\{s_1,\ldots,s_{i-1}\}$, which has $m-i+1$ choices, is
$O((m-i+1)n^{\theta}\log n)$.

Secondly, for any $\{s_1,\ldots,s_j\}\subseteq\cS_B$, we  can compute an
$\epsilon$-approximation to the smallest root of $f^M_{s_1,\ldots,s_{j}}(x)$  using
the standard technique of binary search with a Sturm sequence. This takes
$O(n^2\log(1/\epsilon))$ time per polynomial (see, e.g., \cite{Basu2007}). Noting
that $O((m-i+1)n^{\theta}\log n)+O((m-i+1)n^2\log
(1/\epsilon))=O((m-i+1)n^{\theta}\log (1/\epsilon))$, we obtain that the time
complexity  of Step $4$ is $O((m-i+1)n^{\theta}\log (1/\epsilon))$.
\end{proof}



\begin{thebibliography}{}

\end{thebibliography}


\begin{thebibliography}{10}
\bibitem{arioli2014}
M. Arioli and I.S. Duff, Preconditioning of linear least-squares problems by identifying basic variables, SIAM J. Sci. Comput., 37 (2015), pp. S544-S561.

\bibitem{avron2013}
H. Avron and C. Boutsidis, Faster subset selection for matrices and applications,
\newblock{SIAM J. Matrix Anal. Appl.}, 24 (2013), pp. 1464--1499.

\bibitem{Basu2007}
S. Basu, R. Pollack, and M.F. Coste-Roy, Algorithms in real algebraic geometry, Springer Science \& Business Media, 2007.

\bibitem{BSS14}
J.D. Batson, D. A. Spielman, and N. Srivastava,
\newblock Twice-Ramanujan sparsifiers,
\newblock{SIAM Review}, 56 (2014), pp. 315--334.

\bibitem{Berstein205}
D.S. Bernstein, Matrix mathematics: Theory, facts, and formulas with application to linear systems theory, Princeton: Princeton university press, 2005.

\bibitem{bourgain1987}
J. Bourgain and L. Tzafriri,
\newblock Invertibility of large submatrices with applications to the geometry of Banach spaces and harmonic analysis,
\newblock{Israel J. Math.}, 57 (1987), pp. 137--224.


\bibitem{boutsidis2013}
C. Boutsidis and M. Magdon-Ismail, Deterministic feature selection for $k$-means clustering, IEEE Trans. Inform. Theory, 59 (2013), pp. 6099-6110.


\bibitem{boutsidis2011}
C. Boutsidis, A. Zouzias, M.W. Mahoney, and P. Drineas, Stochastic dimensionality reduction for $k$-means clustering. CoRR, abs/1110.2897, 2011.

%
%


\bibitem{chenD2015}
S. Chen, R. Varma, A. Sandryhaila, and J. Kova\v oevi\' c, Discrete Signal Processing on Graphs: Sampling Theory. IEEE Trans. Signal Process., 63 (2015), pp. 6510-6523.

\bibitem{chudnovsky2007}
M. Chudnovsky and P. Seymour, The roots of the independence polynomial of a clawfree graph, J. Comb. Theory, Series B, 97 (2007), pp. 350-357.


\bibitem{civril2009}
A. \c Civril and M. Magdon-Ismail, On selecting a maximum volume sub-matrix of a matrix and related problems, Theoret. Comput. Sci., 410 (2009), pp. 4801--4811.

\bibitem{Derezinski2018}
M. Derezi\'nski and M.K. Warmuth, Reverse iterative volume sampling for linear regression, J. Mach. Learn. Res., 19 (2018), pp. 1--39.



\bibitem{hoog2007}
R.R. de Hoog and R.M.M. Mattheij, Subset selection for matrices, Linear Algebra Appl., 422 (2007), pp. 349--359.


\bibitem{hoog2011}
R.R. de Hoog and R.M.M Mattheij, A note on subset selection for matrices, Linear Algebra Appl., 434 (2011), pp. 1845--1850.

\bibitem{joshi2009}
S. Joshi and S. Boyd, Sensor selection via convex optimization, IEEE Trans. Signal Process., 57 (2009), pp. 451--462.



\bibitem{Keller1985}
W. Keller-Gehrig, Fast algorithms for the characteristics polynomial, Theoret. Comput. Sci., 36 (1985), pp. 309-317.

\bibitem{Gall2014}
F. Le Gall, Powers of tensors and fast matrix multiplication, Proceedings of the 39th International symposium on symbolic and algebraic computation. ACM, 2014, pp. 296-303.


\bibitem{spielman151}
A. W. Marcus, D. A. Spielman, and N. Srivastava,
\newblock Interlacing Families
\uppercase\expandafter{\romannumeral1}: Bipartite Ramanujan graphs of all degrees,
\newblock{Ann. of Math.}, 182 (2015), pp. 307--325.

\bibitem{spielman152}
A. W. Marcus, D. A. Spielman, and N. Srivastava,
\newblock Interlacing Families
\uppercase\expandafter{\romannumeral2}: Mixed characteristic polynomials and the Kadison--Singer problem,
\newblock{Ann. of Math.} 182 (2015), pp. 327--350.


\bibitem{spielman17}
A. W. Marcus, D. A. Spielman, and N. Srivastava,
\newblock Interlacing Families
\uppercase\expandafter{\romannumeral3}: Sharper restricted invertibility estimates,
\newblock{arXiv}: 1712.07766, 2017, to appear in Israel J. Math..

\bibitem{spielmaniv}
A. W. Marcus, D. A. Spielman, and N. Srivastava,
\newblock
Interlacing Families
\uppercase\expandafter{\romannumeral4}: Bipartite Ramanujan graphs of all sizes,
\newblock{SIAM J. Comput.}, 47 (2018), pp. 2488-2509.


\bibitem{MSS2015FF}
A. W. Marcus, D. A. Spielman, and N. Srivastava, Finite free convolutions of polynomials, arXiv: 1504.00350, 2015.




\bibitem{Naor2017}
A. Naor and P. Youssef, Restricted invertibility revisited,
\newblock{In A Journey Through Discrete Mathematics}, Springer, 2017, pp. 657--691.


\bibitem{pukelsheim2006} F. Pukelsheim, Optimal design of experiments, SIAM, 2006.

\bibitem{spielman2012an}
D. A. Spielman and N. Srivastava,
\newblock An elementary proof of the restricted invertibility theorem,
\newblock{Israel J. Math.}, 190 (2012), pp. 83--91.


\bibitem{vershynin2001}
R. Vershynin, John's decompositions: selecting a large part, Israel J. Math., 122 (2001), pp. 253--277.

\bibitem{youssef2014}
 P. Youssef, A note on column subset selection,
\newblock{Int. Math. Res. Not.}, 23 (2014), pp. 6431--6447.

\bibitem{youssef2014-2}
P. Youssef, Restricted invertibility and the Banach-Mazur distance to the Cube, Mathematika, 60 (2014), pp. 201--218.

\bibitem{youssef2015LAA}
P. Youssef, Extracting a basis with fixed block inside a matrix, Linear Algebra Appl., 469 (2015), pp. 28--38.


\bibitem{zhao2016}
Y. Zhao, F. Pasqualetti, and J. Cort\' es, Scheduling of control nodes for improved network controllability. In 2016 IEEE 55th Conference on Decision and Control (CDC), 2016, pp. 1859-1864.

\end{thebibliography}
\end{document}